\documentclass[conference,a4paper,10pt]{IEEEtran}
\usepackage{dblfloatfix}
\usepackage{graphicx} 
\usepackage{gensymb}
\usepackage{moreverb}          
\usepackage{epstopdf}
\usepackage{balance}
\usepackage{textcomp}
\usepackage{multirow}
\usepackage{booktabs}
\usepackage{scalerel}
\usepackage{fixltx2e}
\usepackage[caption=false,font=footnotesize]{subfig}
\usepackage{amssymb}
\usepackage{comment}
 \usepackage{hyperref}
\usepackage{mathtools}  
\usepackage[linesnumbered,ruled, vlined]{algorithm2e}
\usepackage{algorithmicx}
\usepackage{algpseudocode}
\usepackage[english]{babel}
\usepackage{cite}
\usepackage{amsmath}
\usepackage[utf8]{inputenc}
\usepackage[english]{babel}
\usepackage{amsthm}
\newtheorem{theorem}{Theorem}
\newtheorem{corollary}{Corollary}[theorem]

\usepackage{xcolor}
\usepackage{enumitem}
\setlist{leftmargin=5.5mm}
\usepackage{ragged2e}
\graphicspath{{Figures/}}
\SetKwRepeat{Do}{do}{while}

\newcommand{\subparagraph}{}
\usepackage[compact]{titlesec}
\titlespacing{\section}{0pt}{1ex}{1ex}
\titlespacing{\subsection}{0pt}{1ex}{1ex}
\usepackage[font=small,skip=2pt]{caption}
\newtheorem*{remark}{Remark}
\IEEEoverridecommandlockouts
\begin{document}

\title{Structural Limits of Soft Fusion\\in Multi-Warden Covert Communication}


\author{Abbas Arghavani, Subhrakanti Dey, Anders Ahlén
\thanks{
Abbas Arghavani is with the Division of Embedded Systems, School of Innovation, Design and Engineering, Mälardalens University, Sweden. Subhrakanti Dey, and Anders Ahlén are with the Division of Signals and Systems, Department of Electrical Engineering, Uppsala University, Sweden, 
Email: abbas.arghavani@mdu.se, \{subhrakanti.dey, anders.ahlen\}@angstrom.uu.se.}
}
\maketitle

\begin{abstract}
This paper investigates covert wireless communication in the presence of a Fusion Center (FC) that aggregates raw energy measurements from multiple Wardens using a soft-fusion architecture. Extending our prior work on power--threshold randomization, we consider a stronger adversarial model in which FC jointly randomizes both the number of active Wardens $W$ and its detection threshold $t$, while Alice and a friendly Jammer jointly randomize their transmit powers under an outage constraint at Bob.

We derive a closed-form expression for FC’s optimal soft-fusion threshold and show that it is \emph{independent} of the number of active Wardens. Thus, unlike what one might expect, introducing strategic uncertainty in the sensing infrastructure itself provides no meaningful detection advantage for FC under soft fusion. We further establish a robustness theorem showing that, even under arbitrary FC randomization over $(W,t)$, Alice and Jammer can maintain outage-feasible communication at Bob while preserving covertness with high probability, provided their allowable power ranges are sufficiently large. This reveals a structural limitation of the soft-fusion model.

A game-theoretic formulation characterizes the Nash equilibrium mixed strategies of both sides, accounting for deployment costs and detection-pressure parameters. Analytical and numerical results show that: (i) soft fusion is largely insensitive to the number of Wardens, (ii) even under semi-strategic finite-support geometric randomization of $W$, the resulting performance is comparable to the full game-theoretic equilibrium, and (iii) the covertness--reliability tradeoff remains nearly invariant across a wide range of FC deployment costs and strategy parameters. These findings exemplify a Red Queen effect~\cite{robson2003evolution}, in which FC incurs increasing operational costs for at most marginal gains in detection performance, and they highlight the need for alternative detection architectures.
\end{abstract}

\begin{IEEEkeywords} covert communication; soft fusion; distributed detection; friendly jamming; game theory. \end{IEEEkeywords}

\begin{center}
\footnotesize
\textit{This work has been submitted to the IEEE for possible publication.
Copyright may be transferred without notice, after which this version may no longer be accessible.}
\end{center}

\section{Introduction}\label{sec:Intro}

Securing open wireless communication systems against unauthorized monitoring has become increasingly critical with the explosive growth of wireless connectivity. Classical cryptographic methods ensure data confidentiality but do not conceal the \emph{existence} of transmission. This limitation enables traffic-analysis, replay, and man-in-the-middle attacks, motivating the study of \emph{covert communication}.

A central challenge arises when the Warden possesses accurate statistical knowledge of the background noise. Under this assumption, hiding a transmission over Additive White Gaussian Noise (AWGN) becomes fundamentally difficult, leading to the well-known \emph{square-root law} (SRL): Alice may transmit at most $O(\sqrt{N})$ bits over $N$ channel uses while remaining covert, yielding zero covert capacity as $N\to\infty$ \cite{bash2013limits, lee2015achieving}. A large body of work demonstrates that injecting \emph{uncertainty} into the Warden’s received SINR can substantially degrade detection capability. Such uncertainty may arise naturally from interference variability \cite{wan2019covert, liu2018covert, he2017covert}, or may be intentionally introduced by a cooperating jammer whose transmit power varies randomly \cite{sobers2017covert, soltani2018covert, yan2018delay, li2020optimal, huang2021jamming, xiong2020covert, zheng2019multi, du2022performance, zhang2022covert}.

As a result, a large portion of the literature focuses on strategies that \emph{increase} the Warden’s uncertainty. Examples include friendly jamming with power randomization \cite{soltani2018covert, li2020optimal}, cognitive jammers that transmit only when Alice is silent \cite{xiong2020covert}, distributed or Poisson-field jammers \cite{soltani2018covert, he2018covert}, multi-antenna jammers using beamforming \cite{shmuel2021multi, chen2021uav, forouzesh2020covert}, and even UAV-assisted approaches \cite{chen2021uav, du2022performance}. Additional studies show that multi-antenna Wardens reduce covertness when no jammer is present \cite{shahzad2019covert}, and that Alice can herself generate structured interference using multiple antennas \cite{jamali2021covert}.

\subsection{From Increasing Uncertainty to Reducing It}  
While most works aim to \emph{increase} uncertainty at the Warden, comparatively fewer investigate how a sophisticated adversary can \emph{reduce} this uncertainty. One natural approach is to deploy multiple colluding Wardens whose measurements are aggregated at a \emph{Fusion Center} (FC). Classical results show that both hard and soft fusion architectures can significantly enhance distributed detection performance \cite{Chair1986OptimalDataFusion, guo2017linear, luo2018soft}. In the context of covert communication, our earlier work \cite{arghavani2021game, arghavani2023covert} analyzed the impact of multiple colluding Wardens under both hard and soft fusion.

In particular, \cite{arghavani2023covert} showed that under \emph{soft fusion} (where FC aggregates raw energy measurements), Alice and a cooperative Jammer can largely neutralize even large Warden deployments when FC adapts through threshold randomization and the transmit/jamming power ranges are sufficiently wide. This revealed a fundamental tension between the sensing gains obtainable through multi-Warden aggregation at FC and the uncertainty that Alice can intentionally generate through power randomization. The present paper asks whether that earlier conclusion was merely a consequence of restricting FC to threshold-only adaptation, or whether it reflects a deeper architectural limitation of soft fusion itself. To answer this, we strengthen the adversary by allowing FC to also randomize the number of active Wardens from block to block. The central significance of the present work is that even this richer FC-side strategy does not change the fundamental conclusion.

\subsection{FC Randomization over Wardens and Thresholds}
This paper extends earlier game-theoretic formulations of covert communication under soft fusion~\cite{leong2020game,arghavani2021game,arghavani2023covert} by allowing the Fusion Center (FC) to jointly randomize both the number of active Wardens $W \in \{W_{\min},\dots,W_{\max}\}$ and the detection threshold $t \in \{t_1,\dots,t_M\}$. This induces a two-dimensional FC mixed strategy $\pi_{FC}(W,t)$, where $\pi_{FC}(W,t)$ denotes the joint probability assigned to the pair $(W,t)$. In contrast to earlier formulations, where FC randomizes only the detection threshold and the number of active Wardens is fixed, Alice and Jammer now optimize against the distribution of $(W,t)$ rather than a known deployment size.

This stronger model is motivated by two practical considerations. First, activating additional Wardens incurs deployment and operational costs, which motivates a cost-aware FC utility. Second, Alice does not know how many Wardens are active in each transmission block. We therefore study whether FC-side randomization over the sensing deployment can fundamentally improve the covertness--reliability tradeoff under soft fusion.

We consider two classes of FC-side randomization: (i) the minimax-optimal mixed strategy obtained from the zero-sum game~\cite{aumann2002handbook}, and (ii) a geometric distribution over $W$ as a non-strategic baseline. As in prior work, Alice and Jammer cooperatively randomize their powers to increase FC's detection error while satisfying a target outage constraint at Bob~\cite{zhao2005outage}. This leads to a three-way tradeoff among covertness, outage probability, and FC deployment cost.

The contribution of this formulation is not merely the enlarged strategy space, but the structural conclusions it enables. Our analysis shows that FC-side randomization over $W$ does not provide a meaningful sensing advantage under soft fusion: the optimal threshold remains independent of $W$ (Theorem~\ref{th:optimal_threshold} and Corollary~\ref{cor:tstar_independent_W}), and Alice and Jammer can still drive FC's total detection error arbitrarily close to one even when FC randomizes over $(W,t)$ (Theorem~\ref{th:soft_fusion_robustness}). Thus, the limitation identified in earlier threshold-only formulations persists under this stronger adversarial model.

\subsection{Contributions}

The main contributions of this paper are as follows:

\begin{itemize}
    \item \textbf{Structural Limit under a Stronger Adversary.}
    We extend our earlier threshold-randomization framework by allowing the Fusion Center (FC) to jointly randomize the detection threshold and the number of active Wardens. We derive a closed-form expression for the optimal threshold $t^\star$ and show that it is \emph{independent of} the number of Wardens $W$, establishing that the main soft-fusion limitation persists even under this stronger adversarial model.

    \item \textbf{FC Mixed Strategies and Limited Benefit of Warden Randomization.}
    We characterize the minimax-optimal FC mixed strategy over $(W,t)$ and compare it with a geometric baseline. The two yield nearly identical covertness--reliability tradeoffs, showing that randomizing the number of active Wardens provides little practical benefit to FC.

    \item \textbf{Robust Covert Communication and Cost-Aware FC Behavior.}
    We prove that Alice and Jammer can drive FC's total detection error arbitrarily close to one, even when FC randomizes over $(W,t)$, provided their transmit-power ranges are sufficiently large. We also incorporate deployment and operational costs into FC's utility and show that cost weighting changes equilibrium behavior without overcoming the underlying detection limit.

    \item \textbf{Numerical Validation.}
    Simulations confirm the threshold independence from $W$, the structure of FC's mixed strategy, the limited impact of increasing $W$, and the near-equivalence between the optimal and geometric deployment policies.
\end{itemize}

\subsection{Organization}  
Section~\ref{sec:system_model} introduces the system model.  
Section~\ref{sec:game} formulates the zero-sum game.  
Section~\ref{sec:analytical_results} presents analytical results.  
Section~\ref{sec:numerical_results} provides numerical validations.  

\textit{Notation:} For clarity, the mathematical conventions and primary symbols used in this paper are defined as follows. The complete and upper incomplete Gamma functions are denoted by $\Gamma(a) = \int_{0}^{\infty}e^{-s}s^{a-1}\,ds$ and $\Gamma(a, x) = \int_{x}^{\infty}e^{-s}s^{a-1}\,ds$, respectively. The notation $\mathcal{CN}(\cdot,\cdot)$ signifies a circularly symmetric complex Gaussian distribution. The expectation operator is represented by $\mathbb{E}[\cdot]$, and $|\cdot|$ indicates the absolute value of a scalar. Probabilistic events and their associated distributions are denoted by $\mathbb{P}(\cdot)$ and $\pi$, respectively. A detailed summary of additional symbols used throughout this work is provided in Table~\ref{tab:notations}.  

\begin{table}[!t]
\renewcommand{\arraystretch}{1.3}
\centering\small
\caption{Notation}
\resizebox{\columnwidth}{!}{%
\begin{tabular}{c|p{9cm}}
\hline
\textbf{Symbol} & \textbf{Definition} \\
\hline\hline
$I$, $J$ & Numbers of discrete transmit-power and jamming-power levels \\ \hline
$M$ & Number of discrete detection thresholds at FC \\ \hline
$N$ & Number of channel uses per transmission block \\ \hline
$W$ & Number of active Wardens participating in detection \\ \hline
$W_{\min}, W_{\max}$ & Minimum and maximum possible numbers of active Wardens \\ \hline
$\mathcal{W}$ & The set of possible numbers of active Wardens\\ \hline
$P^{(A)}$, $P^{(J)}$ & Alice’s transmit power and Jammer’s transmit power \\ \hline
$P_{\max}^{(A)}$, $P_{\max}^{(J)}$ & Maximum allowable transmit and jamming powers \\ \hline
$t$ & Global detection threshold used by the FC \\ \hline
$T^{\mathrm{FC}}$ & Aggregated test statistic at the Fusion Center \\ \hline
$\mathbb{P}_{FA}$, $\mathbb{P}_{MD}$ & False-alarm and missed-detection probabilities at FC\\ \hline
$\pi_{\mathrm{FC}}(W,t)$ & FC mixed strategy over the number of Wardens and detection threshold \\ \hline
$\pi^{A, J}$ & Mixed strategies of Alice and Jammer over transmit powers \\ \hline
$\mathbb{P}_{\mathrm{out}}(\cdot)$ & Outage probability at Bob \\ \hline
$h_{ab}$, $h_{jb}$ & Channel fading coefficients from Alice and Jammer to Bob \\ \hline
$\sigma_b^2$ & Noise variance at Bob \\ \hline
$\sigma_w^2$ & Noise variance at Warden $w$ \\ \hline
$R_T$ & Target transmission rate at Bob \\ \hline
$\tau$ & SINR threshold corresponding to rate $R_T$ \\ \hline
$\alpha$, $\beta$ & FC utility weights for deployment cost and detection error\\ \hline\hline
\end{tabular}
}
\label{tab:notations}
\end{table}
\section{System Model and Metrics}\label{sec:system_model}

\begin{figure}[t]
    \centering
    \includegraphics[scale=0.30]{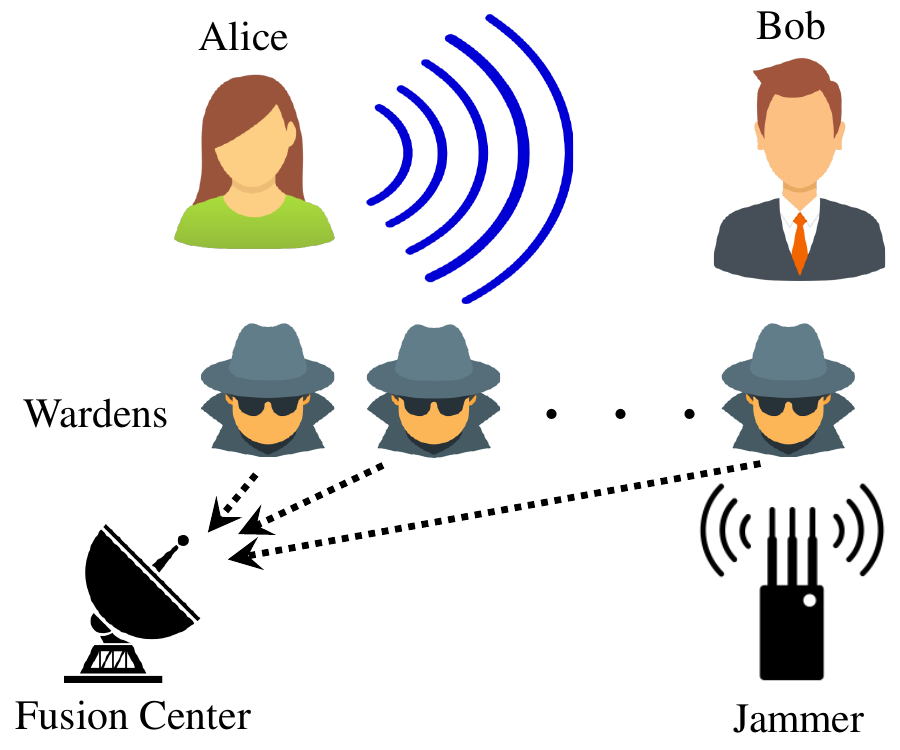}
    \caption{System model: Alice transmits covertly to Bob with the assistance of a cooperative Jammer, in the presence of multiple Wardens whose raw energy measurements are fused at FC.}
    \label{fig:sys_model}
\end{figure}

This section describes the communication channels, the observation model at Bob
and the Wardens, and the soft-fusion architecture employed by FC. We also
define the probability-of-outage metric, which captures reliability at Bob and
remains independent of the Warden deployment strategy.

\subsection{Channel Model}

Fig.~\ref{fig:sys_model} illustrates the system under consideration. Alice
wishes to covertly send a block of $N$ complex-valued symbols to Bob. A
friendly Jammer transmits artificial noise to increase the uncertainty faced
by the Wardens. Throughout the paper, $N$ denotes the blocklength (i.e., the number of channel uses per transmission block).

The channels from Alice and Jammer to Bob follow Rayleigh block fading:
the fading coefficients remain constant within each block but change
independently across blocks. Let $h_{ab}$ and $h_{jb}$ denote the Alice--Bob and
Jammer--Bob fading coefficients, respectively, each satisfying
$\mathbb{E}[|h_{ab}|^2] = \mathbb{E}[|h_{jb}|^2] = 1$. The background noise at
Bob is circularly symmetric complex Gaussian with variance $\sigma_b^2$.

From the viewpoint of covert communication, two hypotheses are considered: $H_0$, corresponding to the absence of Alice’s transmission, and $H_1$, corresponding to Alice transmitting. Under these hypotheses, Bob observes in channel use $n$:
\begin{equation}
\begin{split}
    \mathbf{H_0:}~ y_{b,n} &= h_{jb} j_n + \eta_{b,n}, \\
    \mathbf{H_1:}~ y_{b,n} &= h_{ab} x_n + h_{jb} j_n + \eta_{b,n},
\end{split}
\end{equation}
where $\eta_{b,n}\!\sim\! \mathcal{CN}(0,\sigma_b^2)$ is the noise,  
$x_n\!\sim\! \mathcal{CN}(0,P^{(A)})$ is Alice’s symbol with power $P^{(A)}$,
and $j_n\!\sim\! \mathcal{CN}(0,P^{(J)})$ is Jammer’s symbol with power
$P^{(J)}$.

To enhance covertness, Alice and Jammer randomize their powers:
Alice selects $P^{(A)}$ from $\{P^{(A)}_1,\dots,P^{(A)}_I\}$ and Jammer
selects $P^{(J)}$ from $\{P^{(J)}_1,\dots,P^{(J)}_J\}$, where both sets are
strictly ordered. The joint power-selection probability is
\begin{equation}
    \pi^{A,J}_{i,j} 
    = \Pr(P^{(A)} = P^{(A)}_i, ~ P^{(J)} = P^{(J)}_j).
\end{equation}

We assume that Alice knows the exact transmit and jamming powers selected in each block, whereas Jammer follows a pre-shared random sequence independent of Alice’s activity.
Such sequences may be exchanged beforehand or generated through a shared
resource~\cite{pirbhulal2018heartbeats}. The Wardens and FC know only the \emph{distribution} $\pi^{A,J}$ and not the per-block realizations.

\subsection{Warden Observations and Soft Fusion}

The Wardens monitor the channel by collecting raw energy measurements, which are forwarded to FC for soft fusion. We model the links from Alice and Jammer to each Warden as AWGN channels without fading. This modeling choice is conservative from a covert communication perspective. It favors FC by minimizing channel-induced uncertainty. Such worst-case assumptions are standard in physical-layer security to determine if covert communication remains feasible even under conditions highly favorable to the adversary. Moreover, this assumption simplifies the statistical distributions at FC and is a standard approach in soft-fusion analysis. While closed-form extensions for fading channels exist~\cite{digham2003energy}, they are not required for this study.

The observation at Warden $w$ in channel use $n$ is:
\begin{equation}
\begin{split}
    \mathbf{H_0:}~ y_{w,n} &= j_n + \eta_{w,n}, \\
    \mathbf{H_1:}~ y_{w,n} &= x_n + j_n + \eta_{w,n},
\end{split}
\end{equation}
where $\eta_{w,n}\!\sim\! \mathcal{CN}(0,\sigma_w^2)$. Thus,
\[
y_{w,n} \sim 
\begin{cases}
\mathcal{CN}\!\left(0,\, \sigma_w^2 + P^{(J)}\right), & H_0, \\
\mathcal{CN}\!\left(0,\, \sigma_w^2 + P^{(J)} + P^{(A)}\right), & H_1.
\end{cases}
\]

Each Warden forwards its raw energy measurement $|y_{w,n}|^2$ to FC. Let FC activate $W$ Wardens in a given block.  
For that block, FC computes the soft statistic
\begin{equation}
T_{FC}(W) = \frac{1}{WN} \sum_{w=1}^{W} \sum_{n=1}^N |y_{w,n}|^2.
\label{eq:TFC_def}
\end{equation}
For each fixed $W$, $T_{FC}(W)$ follows a scaled chi-square distribution under
both hypotheses.  

\begin{remark}[On Heterogeneous Warden Noise Levels]
The analysis assumes identical noise variances $\sigma_w^2$ across all Wardens. 
This allows the soft-fusion statistic in~\eqref{eq:TFC_def} to remain Gamma-distributed under both hypotheses, 
which leads directly to the closed-form optimal threshold in Theorem~\ref{th:optimal_threshold}. 
If Wardens instead experience different noise powers, FC observes a sum of non-identically distributed Gamma random variables. 
The exact distribution is still available in closed form via the Moschopoulos representation~\cite{moschopoulos}, but the optimal fusion rule becomes a Maximum Ratio Combining (MRC)-type weighted sum rather than a simple average. The qualitative behavior of our main results (such as the structural independence of the optimal threshold from $W$ 
and the robustness property in Theorem~\ref{th:soft_fusion_robustness}) remains unchanged, although the threshold would need to be computed numerically. 
We therefore adopt the equal-noise assumption for analytical clarity.
\end{remark}

\subsection{Randomization of $W$ and Thresholds at FC}\label{sec:FC_strategy}

A central objective of this work is to characterize the effect of \emph{FC-side randomization} on covert detection performance. 
To this end, in each transmission block FC jointly selects
\(
W \in \mathcal{W} \triangleq \{W_{\min}, \dots, W_{\max}\}
\)
and a detection threshold
\(
t_m \in \{t_1,\dots,t_M\}, \ m=1,\dots,M,
\)
according to a mixed strategy
\(
\pi^{FC}(W,t_m) \ge 0,
\)
satisfying
\[
\sum_{W \in \mathcal{W}} \sum_{m=1}^M \pi^{FC}(W,t_m) = 1.
\]

Alice and Jammer know the distribution $\pi^{FC}(W,t)$ but do not know
which pair $(W,t)$ is used in any particular block. FC applies the global test:
\begin{equation}
T_{FC}(W) \underset{H_1}{\overset{H_0}{\lessgtr}} t_m.
\end{equation}


Conditioned on the specific values $(W,t_m)$, the false-alarm probability (the probability of declaring $H_1$ under $H_0$) and missed-detection probability (the probability of declaring $H_0$ under $H_1$) are~\cite{arghavani2023covert}:
\begin{align}
\mathbb{P}_{FA}(W,t_m) 
&= \sum_{i=1}^I \sum_{j=1}^J
\frac{\Gamma\!\left(WN,\frac{WN t_m}
{\sigma_w^2 + P^{(J)}_j}\right)}{\Gamma(WN)} ~ \pi^{A,J}_{i,j}, \\
\mathbb{P}_{MD}(W,t_m) 
&= \sum_{i=1}^I \sum_{j=1}^J
\!\left[
1 - 
\frac{\Gamma\!\left(WN,\frac{WN t_m}
{\sigma_w^2 + P^{(J)}_j + P^{(A)}_i}\right)}{\Gamma(WN)}
\right] \pi^{A,J}_{i,j}.
\end{align}

Averaging over $(W,t_m)$ yields the overall error probabilities:
\begin{align}
\mathbb{P}_{FA} &= \sum_{W \in \mathcal{W}} \sum_{m=1}^M 
\mathbb{P}_{FA}(W,t_m)\,\pi^{FC}(W,t_m), \\
\mathbb{P}_{MD} &= \sum_{W \in \mathcal{W}} \sum_{m=1}^M 
\mathbb{P}_{MD}(W,t_m)\,\pi^{FC}(W,t_m).
\end{align}

Two types of FC randomization over $W$ are considered:
\begin{itemize}
    \item \emph{Game-optimal randomization}: the minimax solution of the zero-sum game.
    \item \emph{Finite-support geometric randomization}: a practical baseline in which $W$ follows a predetermined finite-support geometric-shaped distribution.
\end{itemize}

It is important to note that under both types of randomization, the interaction remains strategic: Alice and Jammer optimize their power distribution $\pi^{A, J}$ in response to the distribution of $(W, t)$. Correspondingly, under game-optimal randomization, FC jointly optimizes the mixed strategy $\pi^{FC}(W, t)$, while under finite-support geometric randomization, FC optimizes its threshold distribution $\pi^{FC}(t)$ for a predetermined finite-support geometric-shaped distribution of $W$.

\subsection{Outage Probability at Bob}

Since the Wardens and FC do not influence Bob’s channel, the outage
probability is independent of $W$ and depends only on the fading coefficients and
the powers $(P^{(A)},P^{(J)})$. To support a target rate $R_T$, Bob must satisfy an SINR threshold $\tau$ given
approximately by~\cite{leong2020game}:
\begin{equation}
R_T \approx \log_2(1+\tau)
- \sqrt{\frac{1}{N(1+\tau)^2}}\, \frac{Q^{-1}(\upsilon)}{\ln 2},
\end{equation}
where $\upsilon$ is the decoding error probability. An outage occurs if Bob’s instantaneous SINR falls below $\tau$:
\[
\mathrm{SINR}_b
= \frac{|h_{ab}|^2 P^{(A)}}{\sigma_b^2 + |h_{jb}|^2 P^{(J)}} < \tau.
\]

Averaging over the power distribution $\pi^{A,J}$ yields the outage probability~\cite{arghavani2021game, arghavani2023covert}:
\begin{equation}\label{eq:outage_main}
\begin{aligned}
\mathbb{P}_{\mathrm{out}}(\pi^{A,J},\tau)
&= 1 - \sum_{i=1}^I \sum_{j=1}^J
\frac{\exp\!\left(-\frac{\tau\sigma_b^2}{P^{(A)}_i}\right)}
{1 + \frac{\tau P^{(J)}_j}{P^{(A)}_i}} \,
\pi^{A,J}_{i,j}.
\end{aligned}
\end{equation}

This metric captures the reliability constraint that Alice must satisfy
independently of the detection configuration used at FC.

\section{Game-Theoretic Formulation}\label{sec:game}
Following the framework of~\cite{leong2020game,arghavani2023covert}, we formulate the interaction between Alice--Jammer and FC as a two-player
zero-sum game. FC is the sole strategic adversary: Wardens do not act independently but instead forward their raw measurements to FC, which randomizes over both the number of active Wardens and the detection threshold. Our objective is to characterize the equilibrium mixed strategies of both players and understand how FC-side randomization influences the covertness--reliability
tradeoff.

\subsection{Players and Strategy Spaces}

The game consists of two players with the following actions.

\paragraph*{Player 1: Alice--Jammer}
Alice and Jammer jointly select a mixed strategy
\begin{equation}
\pi^{A,J}(P^{(A)},P^{(J)}),
\end{equation}
defined over the discrete transmit-power sets
$\{P^{(A)}_1,\dots,P^{(A)}_I\}$ and
$\{P^{(J)}_1,\dots,P^{(J)}_J\}$.

\paragraph*{Player 2: FC}
FC selects a mixed strategy over the number of active Wardens
and the detection threshold,
\begin{equation}
\pi^{FC}(W,t_m), \qquad
W \in \mathcal{W},\; t_m \in \{t_1,\dots,t_M\},
\end{equation}
as defined in Section~\ref{sec:FC_strategy}.

Compared to earlier models (e.g.,~\cite{arghavani2023covert}) where FC only randomized its threshold, here FC operates over a \emph{two-dimensional} strategy space $(W,t)$, reflecting the possibility of varying the network sensing resources in each block.

\subsection{Utility Function}

The utility is defined from Alice’s perspective. Higher values of the utility
favor Alice, while lower values favor FC. The utility is
\begin{equation}
\label{eq:utility_new}
U = 1 - \mathbb{P}_{\mathrm{out}}(\pi^{A,J})
    + \beta \big(\mathbb{P}_{FA} + \mathbb{P}_{MD}\big)
    + \alpha\, \mathbb{E}[W],
\end{equation}
where:
\begin{itemize}
    \item $\mathbb{P}_{\mathrm{out}}(\pi^{A,J})$ is Bob’s outage probability, dependent only on $(P^{(A)},P^{(J)})$,
    \item $\mathbb{P}_{FA}$ and $\mathbb{P}_{MD}$ are the false-alarm and missed-detection probabilities under soft fusion,
    \item $\beta$ weights the importance of covertness (detection error at FC) relative to the other utility components,
    \item $\alpha$ penalizes FC for activating many Wardens and represents the \emph{per-Warden deployment or operational cost}, explicitly capturing the escalating sensing cost underlying the \emph{Red Queen effect}\footnote{The \emph{Red Queen effect} refers to an arms-race dynamic in which a player must continually increase effort or investment merely to maintain its position, with limited net improvement in outcomes (i.e., ``running to stay in place'').}~\cite{robson2003evolution} discussed in Section~\ref{sec:numerical_results}, with
    \[
   \mathbb{E}[W] = \sum_{W \in \mathcal{W}} \sum_{m=1}^M W\,\pi^{FC}(W,t_m).
   \]

\end{itemize}

Alice--Jammer seek to \emph{maximize} $U$, while FC seeks to \emph{minimize}
it, forming a zero-sum game.

\subsection{Soft-Fusion Detection with Randomized \texorpdfstring{$W$}{W}}

The false-alarm and missed-detection probabilities for any fixed
$(P^{(A)},P^{(J)},W,t)$ were derived in Section~\ref{sec:system_model}.
Under FC-side randomization, these error probabilities are averaged over
the mixed strategy $\pi^{FC}(W,t)$, which determines the effective detection performance faced by Alice and Jammer.

Since Alice does not observe the realization of $W$ in each block, her power allocation must be optimized against the \emph{distribution} of $(W,t)$ rather
than its instantaneous value. This uncertainty is a central feature of the game-theoretic formulation and directly impacts the equilibrium strategies.
As shown in Section~\ref{sec:analytical_results}, randomizing $W$ provides FC surprisingly little advantage under soft fusion.

\subsection{Optimal FC Strategy vs.\ Geometric Baseline}

We consider two forms of FC randomization. In both cases, Alice and Jammer still optimize their joint power distribution via the corresponding equilibrium linear program (LP); the distinction lies only in whether FC optimizes the distribution of $W$ or treats it as fixed.

\subsubsection{Optimal Mixed Strategy (Game-Theoretic)}
FC chooses $\pi^{FC}(W,t)$ to minimize the maximum utility achievable by
Alice--Jammer. This strategy corresponds to FC’s minimax-optimal equilibrium
strategy under the soft-fusion model and is computed via the linear program
in~\eqref{eq:LP_FC_soft}.

\subsubsection{Semi-Strategic Geometric Baseline}
As a practical baseline, we consider a finite-support geometric-shaped distribution over $W$, where the parameter $p$ controls the relative weighting of the candidate deployment sizes. In this case, FC does not optimize $W$ but instead follows a fixed distribution. Numerical results in Section~\ref{sec:numerical_results} show that geometric randomization yields nearly identical covertness--reliability tradeoffs to the optimal strategy, highlighting the intrinsic insensitivity of soft fusion to Warden deployment.

Importantly, the interaction remains strategic, but with a restricted FC strategy space. While the distribution of $W$ is fixed, FC still optimizes its threshold distribution $\pi^{FC}(t)$ in response to Alice--Jammer power randomization, and Alice and Jammer optimize their joint power distribution $\pi^{A,J}$ against the resulting joint distribution of $(W,t)$. This isolates the effect of randomizing $W$ while preserving the underlying game-theoretic structure.

\subsection{Linear Programs for the Nash Equilibrium}

The term $\mathbb{P}_{\mathrm{out}}(P^{(A)},P^{(J)})$ appearing in the linear programs below is evaluated using~\eqref{eq:outage_main}. 
Recall from Section~\ref{sec:system_model} that the outage probability depends on 
the target SINR threshold $\tau$, which is itself determined by the desired rate 
$R_T$ via the rate--SINR approximation provided earlier. 
Thus, the reliability constraint in the LP is implicitly parameterized by 
the communication requirement $R_T$ enforced at Bob.

Because the game is finite and zero-sum, the equilibrium mixed strategies of both players can be computed through companion linear programs: one for the Alice--Jammer pair and one for FC. The equilibrium pair $(\pi^{A,J,\star},\pi^{FC,\star})$ satisfies:
\begin{equation}
\min_{\pi^{FC}} \; \max_{\pi^{A,J}} \;
U(\pi^{A,J},\pi^{FC}).
\end{equation}

\subsubsection{Alice--Jammer Equilibrium LP}

Alice--Jammer solve:
\begin{equation}
\label{eq:LP_Alice_soft}
\begin{aligned}
&\max_{\{\pi^{A,J}_{i,j}\}} \; U \\
\text{s.t. }~
&\sum_{i=1}^I \sum_{j=1}^J \pi^{A,J}_{i,j} = 1, \qquad
 \pi^{A,J}_{i,j} \ge 0, \\
&U \leq 1 - \mathbb{P}_{\mathrm{out}}(P^{(A)}_i,P^{(J)}_j) \\
&+ \beta \sum_{W \in \mathcal{W}}\sum_{m=1}^M \pi^{FC}(W,t_m) \Big(\mathbb{P}_{FA}(P^{(A)}_i,P^{(J)}_j,W,t_m) \\
&\qquad\qquad\qquad\qquad + \mathbb{P}_{MD}(P^{(A)}_i,P^{(J)}_j,W,t_m)\Big) \\
&+ \alpha \sum_{W \in \mathcal{W}}\sum_{m=1}^M W\, \pi^{FC}(W,t_m),
\end{aligned}
\end{equation}
for each $(i,j)$.

\subsubsection{FC Equilibrium LP}

FC solves:
\begin{equation}
\label{eq:LP_FC_soft}
\begin{aligned}
&\min_{\{\pi^{FC}(W,t_m)\}} \; U \\
\text{s.t. }~
&\sum_{W \in \mathcal{W}}\sum_{m=1}^M \pi^{FC}(W,t_m) = 1, \qquad
\pi^{FC}(W,t_m)\ge 0, \\
&U \geq 1 - \mathbb{P}_{\mathrm{out}}(P^{(A)}_i,P^{(J)}_j) \\
&+ \beta \sum_{W \in \mathcal{W}}\sum_{m=1}^M \pi^{FC}(W,t_m)
   \Big(\mathbb{P}_{FA}(P^{(A)}_i,P^{(J)}_j,W,t_m) \\
&\qquad\qquad\qquad\qquad +
      \mathbb{P}_{MD}(P^{(A)}_i,P^{(J)}_j,W,t_m)\Big) \\
&+ \alpha \sum_{W \in \mathcal{W}}\sum_{m=1}^M W\, \pi^{FC}(W,t_m),
\end{aligned}
\end{equation}
for all $(i,j) \in \{1,\dots,I\} \times \{1,\dots,J\}$.



\section{Analytical Results}\label{sec:analytical_results}

This section develops the analytical foundations of the soft-fusion detection
problem. We first derive a closed-form expression for FC’s optimal
energy-detection threshold under fixed transmit and jamming powers. We then prove
a structural property: this optimal threshold is \emph{independent} of the number
of active Wardens~$W$. Finally, we establish a robustness theorem demonstrating that even if FC jointly randomizes $(W,t)$, Alice and Jammer can drive FC’s total detection error arbitrarily close to one. This holds while satisfying Bob's target outage requirement, provided their respective transmit and jamming power ranges are sufficiently large.

\subsection{Optimal Detection Threshold Under Soft Fusion}

We begin by characterizing the threshold that minimizes the total detection error
probability at FC for fixed transmit and jamming powers.

\begin{theorem}\label{th:optimal_threshold}
Consider the soft fusion scheme at FC with a fixed number of Wardens $W$. For fixed transmit power $P^{(A)}$ and jamming power $P^{(J)}$, the total detection error probability
\[
\mathbb{P}_{FA}(t) + \mathbb{P}_{MD}(t)
\]
is minimized by the unique threshold
\begin{equation}\label{eq:optimal_threshold}
t^\star = 
\frac{\big(\sigma_w^2 + P^{(J)}\big)
      \big(\sigma_w^2 + P^{(J)} + P^{(A)}\big)
      \ln\!\left(
         \frac{\sigma_w^2 + P^{(J)} + P^{(A)}}
              {\sigma_w^2 + P^{(J)}}
      \right)}
     {P^{(A)}}.
\end{equation}
\end{theorem}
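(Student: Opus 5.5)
Write $a = \sigma_w^2 + P^{(J)}$ and $b = \sigma_w^2 + P^{(J)} + P^{(A)}$, so that $b > a > 0$, and set $k = WN$. With the powers fixed, $\mathbb{P}_{FA}(t) = \Gamma(k, kt/a)/\Gamma(k)$ and $\mathbb{P}_{MD}(t) = 1 - \Gamma(k, kt/b)/\Gamma(k)$, both differentiable in $t > 0$. The plan is to differentiate the total error $E(t) = \mathbb{P}_{FA}(t) + \mathbb{P}_{MD}(t)$ and locate its critical point. Then we show that this critical point is the unique global minimizer on $(0,\infty)$.

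Since $\frac{d}{dx}\Gamma(k,x) = -x^{k-1}e^{-x}$, the chain rule gives
\[
E'(t) = \frac{k^k t^{k-1}}{\Gamma(k)}\left( \frac{e^{-kt/b}}{b^k} - \frac{e^{-kt/a}}{a^k} \right).
\]
Equivalently, $E'(t) = f_1(t) - f_0(t)$, where $f_0$ and $f_1$ are the Gamma densities of $T_{FC}(W)$ under $H_0$ and $H_1$. For $t > 0$ the prefactor is positive, so the sign of $E'(t)$ equals the sign of $g(t) = e^{-kt/b}b^{-k} - e^{-kt/a}a^{-k}$. Setting $g(t) = 0$ and taking logarithms gives
\[
-\frac{kt}{b} - k\ln b = -\frac{kt}{a} - k\ln a.
\]
This is linear in $t$:
\[
t\left(\frac{1}{a} - \frac{1}{b}\right) = \ln\frac{b}{a},
\]
so $t^\star = \frac{ab \ln(b/a)}{b - a}$. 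Since $b - a = P^{(A)}$, this is exactly \eqref{eq:optimal_threshold}. The factor $k = WN$ cancels during this step, which is what will later give independence from $W$.

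For uniqueness and minimality we study the sign of $g$. The condition $g(t) > 0$ is equivalent to $t(1/a - 1/b) > \ln(b/a)$, i.e., $t > t^\star$, because $1/a - 1/b > 0$. Hence $E'(t) < 0$ on $(0, t^\star)$ and $E'(t) > 0$ on $(t^\star, \infty)$. So $E$ is strictly decreasing and then strictly increasing, and $t^\star$ is its unique global minimizer on $(0,\infty)$. As a check at the boundary, $E(t) \to 1$ both as $t \to 0^+$ and as $t \to \infty$, consistent with an interior minimum.

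The main point needing care is this monotonicity argument, not the stationarity computation. It is what upgrades the stationary point to the unique minimizer; it amounts to the monotone likelihood ratio of the two Gamma densities in $t$. We also need $P^{(A)} > 0$, which guarantees $b > a$ and that the threshold is well defined.
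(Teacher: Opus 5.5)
Your proof is correct. Its stationarity step coincides with the paper's Appendix~A: the same derivative, the same cancellation of $WN$ after taking logarithms, and the same linear equation in $t$. The difference is in how you certify that $t^\star$ is the minimizer.

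The paper computes the second derivative, written in terms of the rate parameters $WN/(\sigma_w^2+P^{(J)})$ and $WN/(\sigma_w^2+P^{(J)}+P^{(A)})$. It then uses the stationarity identity to reduce $E''(t^\star)>0$ to the ordering of these two rates. On its own, that test only shows $t^\star$ is a strict local minimum. Global optimality then rests on the observation, not spelled out in the paper, that the stationary point is unique: since $E'$ is continuous with a single zero and changes sign there, $E$ must decrease and then increase.

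You read off that sign pattern directly instead. Because $E'=f_1-f_0$ and the likelihood ratio $f_1/f_0=(a/b)^{k}e^{kt(1/a-1/b)}$ is strictly increasing, $E'<0$ on $(0,t^\star)$ and $E'>0$ on $(t^\star,\infty)$. This gives uniqueness and global minimality in one step, with no second-derivative algebra.

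Your route is shorter and makes the global claim explicit. It is also the standard monotone-likelihood-ratio argument, so it relies on the Gamma form only through that property. The paper's route gives, as a by-product, an explicit expression for the curvature at $t^\star$, but the theorem does not need it.
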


\begin{proof}
See Appendix~\ref{app:optimal_treshold}.
\end{proof}

\subsection{Threshold Independence from the Number of Wardens}

The previous result yields the following structural consequence: under soft fusion, the threshold that minimizes the total detection error is independent of the number of active Wardens. This immediately yields the following structural consequence:

\begin{corollary}\label{cor:tstar_independent_W}
Under the soft fusion scheme, the optimal threshold $t^\star$ in
\eqref{eq:optimal_threshold} is \emph{independent} of the number of Wardens $W$.
For any fixed transmit and jamming powers $(P^{(A)},P^{(J)})$, the threshold that
minimizes
\[
\mathbb{P}_{FA}(W,t) + \mathbb{P}_{MD}(W,t)
\]
is identical for all $W \in \mathcal{W}$.
\end{corollary}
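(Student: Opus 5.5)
The plan is to redo the first-order analysis behind Theorem~\ref{th:optimal_threshold} while keeping $W$ as an explicit parameter. Then we check that $W$ cancels from the stationarity condition. Write $a \triangleq \sigma_w^2 + P^{(J)}$ and $b \triangleq \sigma_w^2 + P^{(J)} + P^{(A)}$, so $b>a>0$, and set $K \triangleq WN$. For fixed powers, $K\,T_{FC}(W)$ is a sum of $K$ i.i.d.\ exponentials of mean $a$ under $H_0$ and of mean $b$ under $H_1$. Hence $T_{FC}(W)$ is Gamma with shape $K$ and scale $a/K$ (resp.\ $b/K$), and
\[
\mathbb{P}_{FA}(W,t)+\mathbb{P}_{MD}(W,t) = 1 + \frac{\Gamma(K,Kt/a)-\Gamma(K,Kt/b)}{\Gamma(K)}.
\]

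Differentiating in $t$ with $\frac{d}{dx}\Gamma(K,x) = -x^{K-1}e^{-x}$ gives
\[
\frac{\partial}{\partial t}\big(\mathbb{P}_{FA}+\mathbb{P}_{MD}\big) = \frac{K^K t^{K-1}}{\Gamma(K)}\left(\frac{e^{-Kt/b}}{b^K} - \frac{e^{-Kt/a}}{a^K}\right).
\]
For $t>0$ the prefactor is strictly positive. The sign is therefore that of $\frac{e^{-Kt/b}}{b^K} - \frac{e^{-Kt/a}}{a^K}$. Setting this to zero and taking logarithms gives
\[
K\Big(\tfrac{t}{a}-\tfrac{t}{b}\Big) = K\ln\tfrac{b}{a}.
\]
The common factor $K=WN>0$ cancels, leaving $t\,\frac{b-a}{ab} = \ln\frac{b}{a}$. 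This is exactly $t^\star$ in \eqref{eq:optimal_threshold}, which contains no $W$ (nor $N$).

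To confirm $t^\star$ is the minimizer for every $W$, and not merely a stationary point, I will check the sign of the derivative. Its sign equals that of $\exp\!\big(K[t(\tfrac1a-\tfrac1b)-\ln\tfrac ba]\big)-1$, since $\frac{e^{-Kt/b}}{b^K} - \frac{e^{-Kt/a}}{a^K} = \frac{e^{-Kt/b}}{b^K}\big(1 - e^{-K[t(\frac1a-\frac1b)-\ln\frac ba]}\big)$. This is negative for $t<t^\star$ and positive for $t>t^\star$, uniformly in $K$. So the total error is strictly decreasing and then strictly increasing, and $t^\star$ is the unique global minimizer on $t>0$ for each $W\in\mathcal{W}$. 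That yields the corollary.

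The main (mild) obstacle is keeping the scaling in \eqref{eq:TFC_def} consistent. The $1/(WN)$ normalization must put $K=WN$ both in the Gamma shape and inside the argument $Kt/a$; otherwise $W$ would not cancel. I will verify this against the $\mathbb{P}_{FA}(W,t_m)$ and $\mathbb{P}_{MD}(W,t_m)$ expressions stated earlier, which indeed use $\Gamma(WN, WNt_m/(\cdot))$.
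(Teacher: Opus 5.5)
Your proposal is correct and follows essentially the same route as the paper. The paper's proof simply invokes Theorem~\ref{th:optimal_threshold}, whose appendix derivation is exactly your first-order computation with the common factor $WN$ cancelling after taking logarithms. Your sign analysis of the derivative, negative for $t<t^\star$ and positive for $t>t^\star$ for every $W$, is a somewhat cleaner certificate of the global minimizer than the paper's second-derivative check at $t^\star$, which on its own only gives local minimality.
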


\begin{proof}
From Theorem~\ref{th:optimal_threshold}, for any fixed $W$ and fixed transmit and jamming powers $(P^{(A)},P^{(J)})$, the threshold minimizing
\[
\mathbb{P}_{FA}(W,t)+\mathbb{P}_{MD}(W,t)
\]
is given by~\eqref{eq:optimal_threshold}. Since the expression for $t^\star$ in~\eqref{eq:optimal_threshold} does not contain $W$, the same minimizing threshold applies for all $W\in\mathcal{W}$.
\end{proof}

Throughout the game formulation, we allow FC to adopt an arbitrary joint mixed strategy over $(W,t)$ for full generality. However, Corollary~1.1 shows that, for any fixed $(P^{(A)},P^{(J)})$, the threshold minimizing
\[
\mathbb{P}_{FA}(W,t)+\mathbb{P}_{MD}(W,t)
\]
is identical for all feasible values of $W$. Therefore, under soft fusion, conditioning the threshold choice on the realized number of active Wardens does not provide FC with an additional structural degree of freedom at the sensing level. In this sense, $\pi^{FC}(W,t)$ remains the admissible mixed-strategy representation of the game, but its threshold component effectively collapses to a $W$-independent optimizer once the detection problem is solved.

\subsection{Structural Limitation of Soft Fusion Under FC Randomization Over \texorpdfstring{$(W,t)$}{(W,t)}}
We now establish a fundamental limitation of soft fusion under FC-side
randomization. Specifically, we show that allowing FC to jointly randomize
over the number of active Wardens and the detection threshold does not prevent
Alice and Jammer from forcing the total detection error probability to
approach one with arbitrarily high probability, provided the allowable power ranges are sufficiently large.

\begin{theorem}\label{th:soft_fusion_robustness}
Consider the soft-fusion scheme in which FC randomizes over a finite set of pairs $(W,t)$ according to an arbitrary mixed strategy $\pi^{FC}(W,t)$. Fix any target covertness level $1-\varepsilon$, with $\varepsilon>0$, and any outage constraint $1-\mathbb{P}_{\mathrm{out}}$ at Bob. 
Then, provided the maximum transmit and jamming powers are sufficiently large,
Alice and Jammer can construct a mixed strategy over outage-feasible transmit--jamming power pairs such that the overall detection error at FC satisfies
\[
\mathbb{P}_{FA} + \mathbb{P}_{MD} \ge 1-\varepsilon
\]
with probability arbitrarily close to one.
\end{theorem}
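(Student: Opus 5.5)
The plan is to exploit the fact that FC's strategy space is a \emph{finite} set of pairs $(W,t_m)$, while Alice and Jammer can spread their powers over an arbitrarily wide range. The aim is to make every fixed threshold $t_m$ almost useless against a randomly scaled power pair. The construction uses a geometric ladder of outage-feasible pairs. Fix a ratio $\rho=P^{(J)}/P^{(A)}$ small enough, and let $P^{(A)}$ be large enough, that
\[
\frac{e^{-\tau\sigma_b^2/P^{(A)}}}{1+\tau\rho}
\]
meets Bob's target success probability. This is possible because, by \eqref{eq:outage_main}, the success probability tends to $1/(1+\tau\rho)$ as $P^{(A)}\to\infty$ with the ratio fixed. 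All pairs in the ladder are then of the form $(P^{(A)}_k,P^{(J)}_k)=(c^k P_0,\ \rho c^k P_0)$ for $k=1,\dots,K$ with a large factor $c>1$. Every such pair is outage-feasible, so any mixture of them satisfies the outage constraint, since $\mathbb{P}_{\mathrm{out}}$ is linear in $\pi^{A,J}$.

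The next step is to bound the error for a single pair against a single $(W,t)$. For a fixed pair $k$ and fixed $(W,t)$, write $a_k=\sigma_w^2+P^{(J)}_k$ and $b_k=a_k+P^{(A)}_k$. The statistic $T_{FC}(W)$ is Gamma-distributed with mean $a_k$ under $H_0$ and $b_k$ under $H_1$. It concentrates with relative spread of order $1/\sqrt{WN}$, uniformly bounded because $W\ge W_{\min}$ and $\mathcal{W}$ is finite. I will show the following. If $t\le \delta a_k$ for a small $\delta$, then $\mathbb{P}_{FA}\approx 1$. If $t\ge b_k/\delta$, then $\mathbb{P}_{MD}\approx 1$. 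In either case $\mathbb{P}_{FA}+\mathbb{P}_{MD}\ge 1-\varepsilon$. Both tails are explicit incomplete-Gamma bounds, namely
\[
\frac{\Gamma(WN,WN\delta)}{\Gamma(WN)}\to 1 \quad\text{as } \delta\to 0,
\]
uniformly over the finitely many $W$. The only pairs that can be bad for a given $t$ are those with $\delta a_k<t<b_k/\delta$. Since $b_k/a_k\le 1+1/\rho$ is bounded and the ladder grows geometrically with factor $c$, each threshold $t_m$ falls into this window for at most a bounded number $L$ of indices $k$. Here $L$ depends only on $\delta$, $\rho$ and $c$, not on $K$.

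For the probability statement, let Alice and Jammer choose $k$ uniformly on $\{1,\dots,K\}$. For every realization of $(W,t_m)$, the chance that the selected pair is bad is at most $ML/K$. Averaging over $\pi^{FC}$, which is arbitrary but ranges over only $M$ thresholds, keeps the bound uniform in the mixed strategy. Hence with probability at least $1-ML/K$ over the power draw, the conditional total error is at least $1-\varepsilon$. Taking $K$ large makes this probability arbitrarily close to one. This requires $P^{(A)}_{\max}\ge c^K P_0$ and $P^{(J)}_{\max}\ge\rho c^K P_0$, which is exactly the ``sufficiently large power ranges'' hypothesis.

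The main obstacle I expect is making the tail bounds uniform and simultaneously compatible with outage feasibility. The fixed ratio $\rho$ is forced to be small by the outage constraint, which makes $b_k/a_k\approx 1/\rho$ large. The window $(\delta a_k,b_k/\delta)$ is therefore wide, so $c$ must exceed roughly $1/(\rho\delta^2)$ to keep $L$ bounded. I would first try choosing $c$ this way and verify that the Gamma tail constants depend only on $W_{\min}N$ and $W_{\max}N$. Corollary~\ref{cor:tstar_independent_W} shows FC gains nothing threshold-wise from $W$, which reassures that the uniformity over $W$ is harmless.
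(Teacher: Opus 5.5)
Your proof is correct and has the same skeleton as the paper's. Both use a threshold window outside which the total error is forced above $1-\varepsilon$, made uniform over the finite $W$-support through $W_{\min}$. Both then build a ladder of outage-feasible pairs with separated windows, randomize uniformly over it, and average over $\pi^{FC}$. Your key tail estimate is different, though, and the difference matters.

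\textbf{Tail estimate.} The paper uses Chebyshev, which gives additive windows $[\mu_0(1-1/\sqrt{W_{\min}N\varepsilon}),\,\mu_1(1+1/\sqrt{W_{\min}N\varepsilon})]$. As soon as $W_{\min}N\varepsilon\le 1$, every such window contains $(0,2\sigma_w^2]$. The paper's separation condition then cannot be met, so its argument as written only covers $\varepsilon>1/(W_{\min}N)$. This is not merely Chebyshev being loose: no argument using only the mean and variance of $T_{FC}$ can handle $\varepsilon<1/(W_{\min}N+1)$, because a nonnegative variable with mean $\mu$ and variance $\mu^2/n$ may put mass $1/(n+1)$ at $0$. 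Your multiplicative windows $(\delta a_k,b_k/\delta)$, taken from the exact regularized incomplete-Gamma tails, work for every $\varepsilon>0$, which is what the statement claims.

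\textbf{Ladder construction.} The paper defers the existence of the separated ladder to prior work; you build it explicitly. At a fixed ratio $\rho>0$, the per-pair success probability $e^{-\tau\sigma_b^2/P^{(A)}}/(1+\tau\rho)$ is increasing in $P^{(A)}$. So one feasible base pair makes the whole geometric ladder feasible, and linearity in $\pi^{A,J}$ handles the mixture.

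\textbf{Overlap versus disjointness.} You allow a bounded overlap $L$ and obtain $1-L/K$. The paper insists on pairwise disjoint windows and obtains one minus the reciprocal of the number of pairs. The two are equivalent in effect, and taking $\rho P_0\ge\sigma_w^2$ with $c\ge(1+2\rho)/(\rho\delta^2)$ already gives $L=1$.

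\textbf{Two small corrections.} First, $L$ is bounded for every $c>1$; a large $c$ is needed only to get $L=1$. Without $\rho P_0\ge\sigma_w^2$, $L$ also depends on $\sigma_w^2/(\rho P_0)$, which is harmless because $P_0$ is fixed before $K$. Second, for a fixed realization $(W,t_m)$ the failure probability is at most $L/K$. Your $ML/K$ is the union bound for the stronger requirement that the drawn pair defeat all $M$ thresholds simultaneously.
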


\begin{proof}
See Appendix~\ref{app:soft_fusion_robustness}.
\end{proof}

This theorem formalizes a structural limitation of soft fusion: allowing FC to randomize jointly over the number of active Wardens ($W$) and the detection threshold ($t$) does not prevent Alice and Jammer from forcing the total detection error probability arbitrarily close to one, even when such randomization is adversarially chosen.

\subsection{Additional Insights: Deployment Cost and Geometric Warden Allocation}

\begin{remark}[Effect of Deployment Cost $\alpha$]
FC’s cost parameter $\alpha$ appears only through the additive term
$\alpha\,\mathbb{E}[W]$ in the utility. Varying $\alpha$ alters FC’s
equilibrium distribution over $W$ by shifting preference toward smaller
deployments when costs are high and toward larger deployments when costs are low.
However, $\alpha$ \emph{does not} change the feasible region of achievable
covertness--reliability pairs. This is consistent with numerical results
(Sec.~\ref{sec:numerical_results}), where the tradeoff frontier remains essentially unchanged across a wide
range of~$\alpha$.
\end{remark}

\begin{remark}[Finite-Support Geometric Warden Deployment]
If FC activates Wardens according to a finite-support geometric-shaped distribution over a set such as $\{1,4,16,64\}$, the optimal soft-fusion threshold remains $t^\star$,
since Corollary~\ref{cor:tstar_independent_W} shows that the threshold minimizing
$\mathbb{P}_{FA}(W,t)+\mathbb{P}_{MD}(W,t)$ is independent of $W$ for fixed
$(P^{(A)},P^{(J)})$. Moreover, the robustness argument of
Theorem~\ref{th:soft_fusion_robustness} extends to this setting whenever the
analysis is restricted to a finite set of possible values of $W$ (for example,
via truncation to a sufficiently large finite range). In that case, the same
Alice--Jammer power-randomization strategy can be chosen to work simultaneously
for all such values of $W$. This provides a theoretical explanation for the observation in
Section~\ref{sec:numerical_results} that geometric and deterministic choices of
$W$ yield nearly identical tradeoff curves over the simulated parameter range.
\end{remark}

\section{Numerical Results}\label{sec:numerical_results}

This section presents numerical results that validate the analytical findings of
Section~\ref{sec:analytical_results} and illustrates the behavior of FC under the soft-fusion architecture. All results reinforce the structural limitations derived earlier: increasing or randomizing the number of Wardens offers little to no improvement in detection capability. Our objectives are fivefold:
\begin{enumerate}
    \item Validate the analytical threshold optimizer in Theorem~\ref{th:optimal_threshold} by comparing it against the numerically obtained minimum of $\mathbb{P}_{FA}+\mathbb{P}_{MD}$.
    \item Illustrate that FC-side randomization over the number of Wardens $W$ does not yield a meaningful improvement in detection performance under soft fusion, in line with the analytical limitations established in Section~\ref{sec:analytical_results}.
    \item Characterize FC’s equilibrium randomization over $(W,t)$ obtained from the zero-sum game formulation.
    \item Quantify how the Warden deployment cost $\alpha$ and the detection-error weight $\beta$ influence FC’s optimal operating point.
    \item Compare the game-theoretic equilibrium to the semi-strategic geometric Warden deployment strategies.
\end{enumerate}

\subsection{Simulation Setup}
All simulations use the soft-fusion model where FC combines raw energy
measurements from $W$ active Wardens. The baseline parameters are presented in Table~\ref{tab:sim_params}. Moreover, two warden-selection mechanisms are investigated: (i) game-optimal randomization, where $(W,t)$ is jointly drawn according to the full minimax equilibrium $\pi^{{FC},\star}$, and (ii) a semi-strategic geometric baseline. In the latter, the distribution of $W$ is fixed to a finite-support geometric-shaped distribution over the set $W\in\{1,4,16,64\}$, given by
\[
\Pr(W=w)=\frac{(1-p)^{w-1}p}{\sum_{u\in\{1,4,16,64\}} (1-p)^{u-1}p}.
\]
Here, the parameter $p$ controls the relative weighting of the four candidate deployment sizes, while the interaction remains strategic as FC optimizes its threshold distribution $\pi^{FC}(t)$ and Alice--Jammer optimize their powers $\pi^{A,J}$ in a Nash equilibrium under this fixed distribution of $W$.

\begin{table}[!t]
\centering
\caption{Simulation Parameters}
\label{tab:sim_params}
\begin{tabular}{ll}
\toprule
\textbf{Parameter} & \textbf{Value} \\
\midrule
Blocklength & $N = 200$ \\
Bob noise variance & $\sigma_b^2 = 1$ mW \\
Wardens noise variance & $\sigma_w^2 = 1$ mW \\
Target rate & $R_T = 0.4$ bits/use \\
Decoding error probability & $\upsilon = 0.1$ \\
Outage threshold & $\tau \approx 0.407$ \\
Alice power range & $P^{(A)} \in [0.01,3]$ mW \\
Jammer power range & $P^{(J)} \in [0.01,3]$ mW \\
Power grid spacing & $0.01$ mW \\
FC threshold range & $t \in [0.01,6]$ mW \\
Threshold grid spacing & $0.01$ mW \\
Number of Wardens & $W \in \{1,4,16,64\}$ \\
Detection weight & $\beta \in \{0.1,\dots,128\}$ \\
Deployment cost & $\alpha \in \{10^{-1},10^{-2},10^{-5},10^{-10}\}$ \\
\bottomrule
\end{tabular}
\end{table}

\subsection{Threshold Validation}
\label{subsec:threshold_validation}
We first validate the analytical threshold optimizer derived in
Theorem~\ref{th:optimal_threshold} using a representative fixed-power operating point. Specifically, we set $P^{(A)}=P^{(J)}=2~\mathrm{mW}$ and $\sigma_w^2=1~\mathrm{mW}$, and evaluate the total detection error probability $\mathbb{P}_{\mathrm{FA}}+\mathbb{P}_{\mathrm{MD}}$ as a function of FC decision threshold $t$. As shown in Fig.~\ref{fig:threshold_validation}, the curve exhibits a clear minimum at $t^\star \approx 3.8312$, which matches the closed-form optimal threshold predicted by Theorem~\ref{th:optimal_threshold}. This agreement confirms that the
analytical optimizer correctly characterizes FC’s best operating point (in terms of minimizing $\mathbb{P}_{\mathrm{FA}}+\mathbb{P}_{\mathrm{MD}}$) under the soft-fusion detection model.

\begin{figure}[!t]
    \centering
    \includegraphics[width=7.25cm]{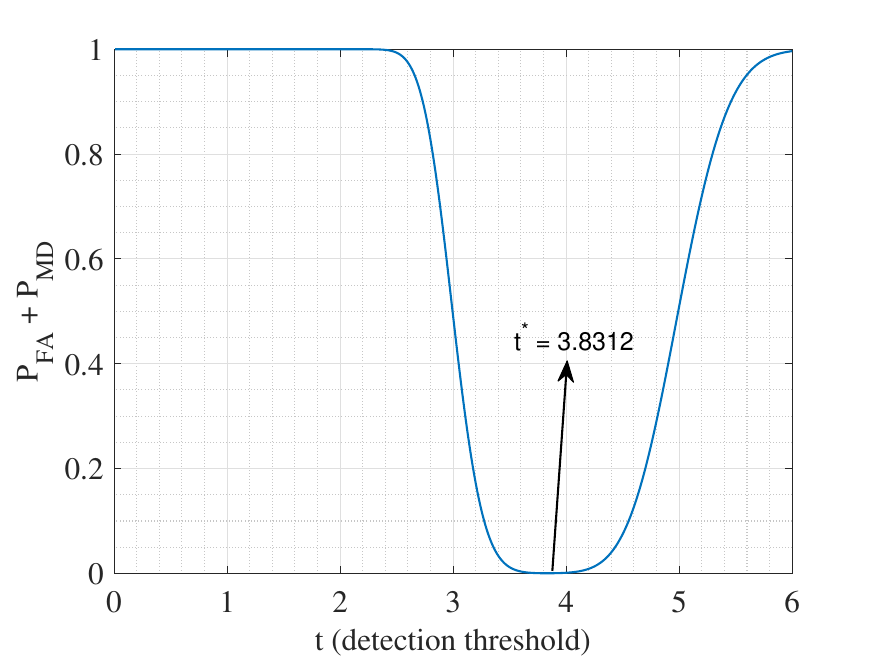}
    \caption{$\mathbb{P}_{\mathrm{FA}}+\mathbb{P}_{\mathrm{MD}}$ versus threshold $t$ for $P(A)=P(J)=2~\mathrm{mW}$ and $\sigma_w^2=1~\mathrm{mW}$. The minimum occurs at $t^\star\!\approx\!3.8312$, matching Theorem~\ref{th:optimal_threshold}.}
    \label{fig:threshold_validation}
    \vspace{5mm}
\end{figure}

\begin{figure}[t!]
    \centering
    \includegraphics[width=7.25cm]{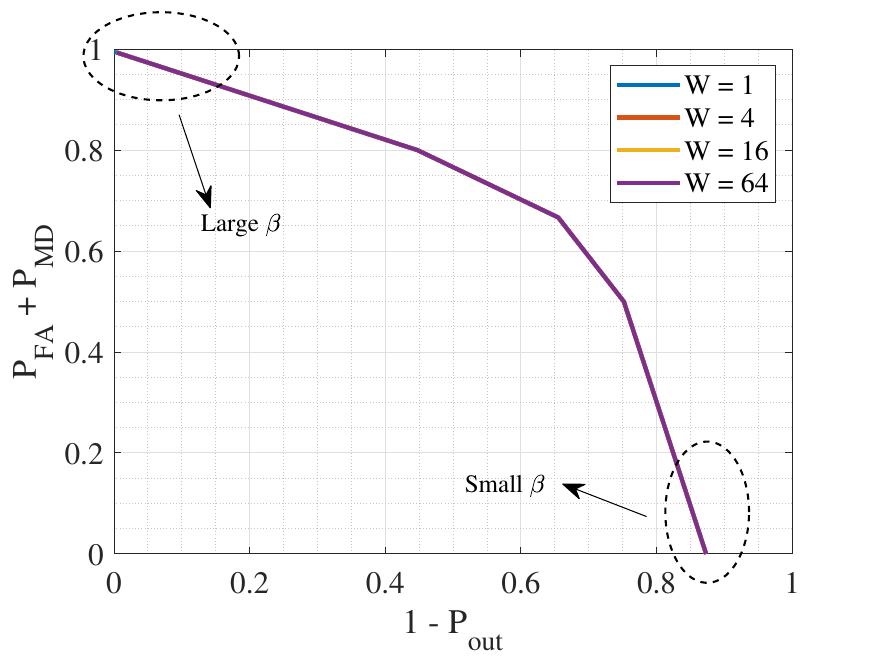}
    \caption{Soft-fusion tradeoff curves for fixed $W$. The near-total overlap across $W \in \{1, 4, 16, 64\}$ is consistent with Corollary~\ref{cor:tstar_independent_W} and indicates that the covertness vs. $1-\mathbb{P}_{\mathrm{out}}$ frontier is only weakly affected by sensing infrastructure scaling under soft fusion.}
    \label{fig:soft_fixed_W_tradeoff}
\end{figure}

\subsection{Soft Fusion Under Fixed \texorpdfstring{$W$}{W}}
We first study the fundamental covertness vs. $1-\mathbb{P}_{\mathrm{out}}$ tradeoff under fixed, known
numbers of Wardens. Fig.~\ref{fig:soft_fixed_W_tradeoff} plots the curve $\mathbb{P}_{FA} + \mathbb{P}_{MD} \quad \text{vs.} \quad 1 - P_{\mathrm{out}}$, for $W \in \{1, 4, 16, 64\}$, with $\alpha = 0.1$.
\begin{figure*}[t!]
    \centering
    \subfloat[$\beta = 0.1$]{
        \includegraphics[width=0.24\textwidth]{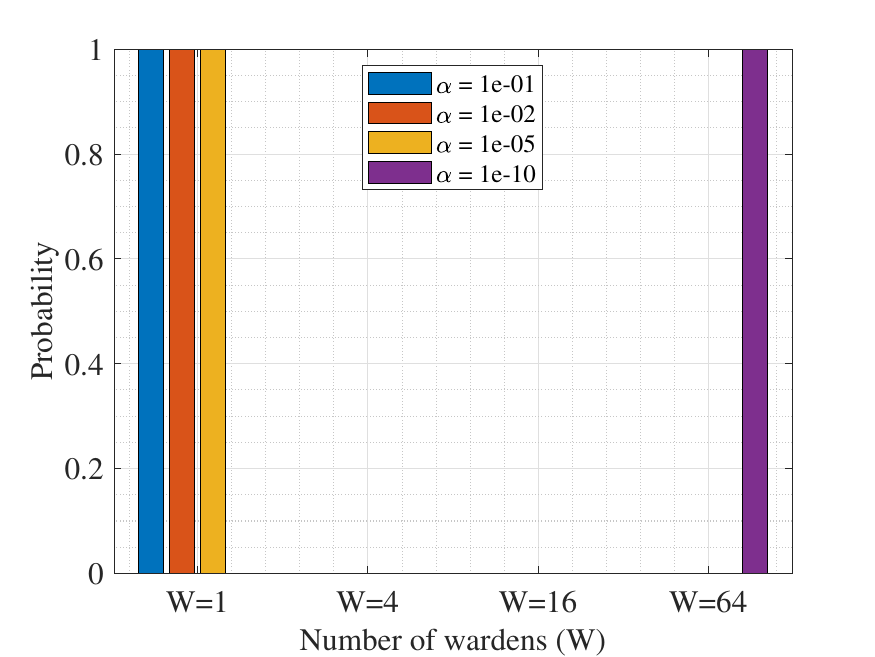}}
    \subfloat[$\beta = 4$]{
        \includegraphics[width=0.24\textwidth]{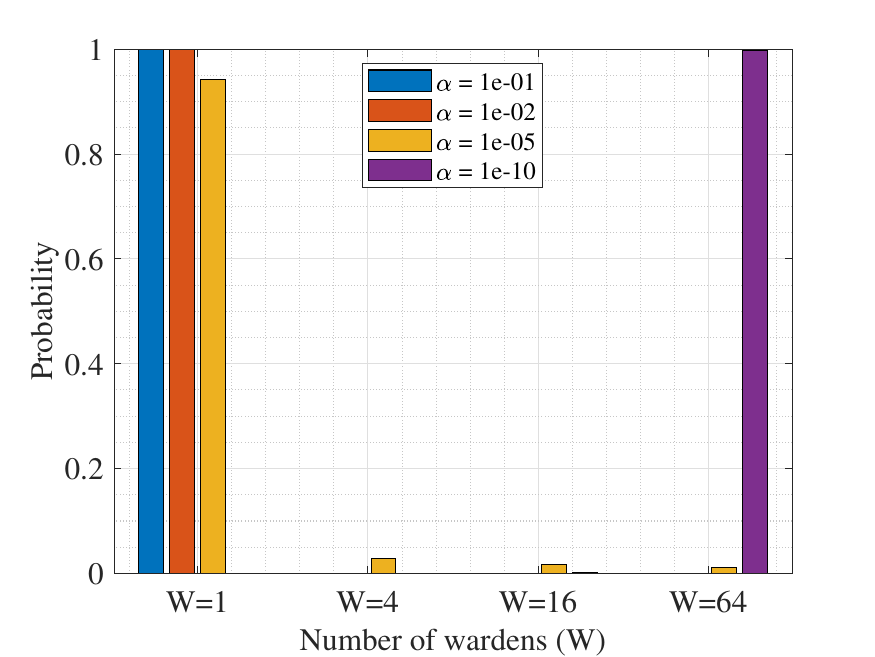}}
    \subfloat[$\beta = 16$]{
        \includegraphics[width=0.24\textwidth]{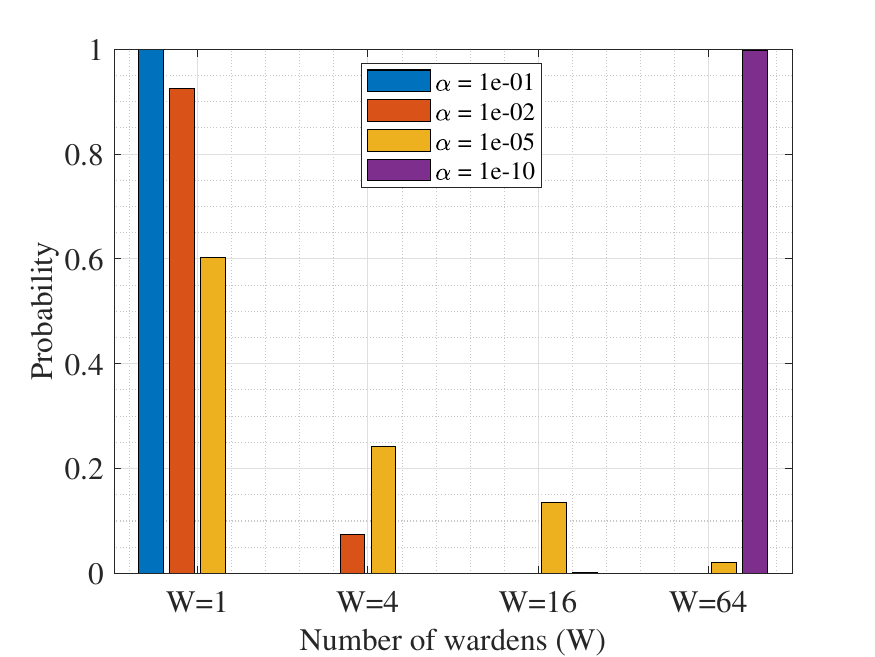}}
    \subfloat[$\beta = 64$]{
        \includegraphics[width=0.24\textwidth]{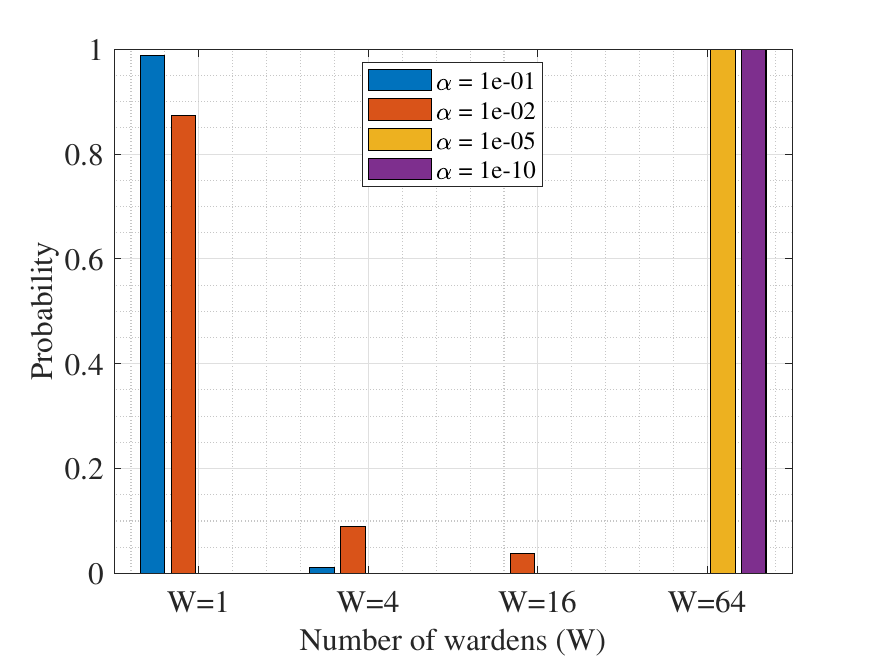}}
    \caption{Optimal FC mixed strategy $\pi^{{FC},\star}(W)$. As the detection weight $\beta$ increases, the Nash equilibrium shifts probability mass toward larger $W$. This behavior reflects FC’s willingness to incur higher deployment costs in exchange for marginal reductions in detection error that become utility-relevant under large $\beta$, despite the absence of a structural sensing gain under soft fusion.}
    \label{fig:FC_W_distribution}
\end{figure*}
The curves for all values of $W$ are nearly identical: increasing $W$ by a factor of $64$ produces almost no change in FC’s detection performance. This observation is consistent with the $W$-independence of the optimal soft-fusion threshold and with the limited sensitivity of the soft-fusion tradeoff to deployment scaling, under which increasing the number of Wardens does not materially tighten the achievable tradeoff between detection error and $1-\mathbb{P}_{\mathrm{out}}$.

\subsection{Game-Optimal Randomization Over \texorpdfstring{$W$}{W}}

We now examine FC’s optimal Warden deployment strategy obtained from the zero-sum game formulation. Fig.~\ref{fig:FC_W_distribution} shows the equilibrium
distribution $\pi^{{FC},\star}(W)$ over $W \in \{1,4,16,64\}$ for four representative detection-error weights
\[
\beta \in \{0.1,4,16,64\},
\]
and for four Warden deployment cost parameters
\[
\alpha \in \{10^{-1},10^{-2},10^{-5},10^{-10}\}.
\]

As seen in Fig.~\ref{fig:FC_W_distribution}, increasing the detection weight $\beta$ causes the equilibrium distribution to progressively concentrate on larger values of $W$, with the maximum deployment $W=64$ dominating when $\beta$ is large and $\alpha$ is small. This behavior indicates that under strong detection pressure (i.e., large $\beta$), FC increasingly prefers larger sensing deployments.

To interpret this trend, recall from Fig.~\ref{fig:soft_fixed_W_tradeoff} that increasing the number of Wardens does not substantially tighten the achievable covertness vs. $1-\mathbb{P}_{\mathrm{out}}$ tradeoff: the tradeoff curves remain nearly invariant across all values of $W$. This near-invariance reflects an important structural limitation of soft fusion: although increasing $W$ changes the concentration of the detection statistic, it does not change the optimal threshold formula and yields only marginal improvements over the operating range considered here. This substantially limits the diversity gain achievable by aggregating additional sensing nodes.
Nevertheless, for finite blocklength $N$ and finite power and threshold grids, larger deployments may yield very small reductions in
$\mathbb{P}_{FA}+\mathbb{P}_{MD}$ at certain operating points. These gains exhibit strongly diminishing returns and vanish asymptotically, but when the detection weight $\beta$ is large, even such marginal improvements become amplified in the utility function and can outweigh the linear deployment cost
$\alpha \mathbb{E}[W]$.

Accordingly, FC’s preference for larger deployments under high detection pressure should not be interpreted as a fundamental sensing gain. Instead, it reflects a rational equilibrium response in which FC is willing to incur higher sensing costs to marginally reduce heavily penalized detection errors, despite the near-invariance of the underlying sensing frontier.

\medskip
\noindent\textbf{Mixed Strategies at Nash Equilibrium.}
The distributions in Fig.~\ref{fig:FC_W_distribution} correspond to Nash equilibrium mixed strategies of the zero-sum game between FC and Alice--Jammer pair. The emergence of randomization over $W \in \mathcal{W}$ is a direct consequence of the interaction between the Alice–Jammer pair’s optimal power randomization and FC’s discrete deployment choices. In particular, the effectiveness of a given $W$ depends on the Alice–Jammer pair’s power allocation, and the resulting payoff does not exhibit quasi-convex behavior over the discrete strategy space.

As a result, no single deployment size uniformly minimizes FC’s utility against all \emph{feasible} joint power randomization strategies $\pi^{A,J}(P^{(A)},P^{(J)})$. FC therefore adopts a mixed strategy that hedges
against these varying responses, assigning probability mass across several near-optimal deployment sizes. This mixedness reflects equilibrium indifference rather than an intrinsic sensing advantage of randomization.

\medskip
\noindent\textbf{Inefficiency of Sensing Escalation.}
This equilibrium behavior exposes a fundamental inefficiency of soft-fusion
architectures. As illustrated in Figs.~\ref{fig:EW_vs_alpha}
and~\ref{fig:EW_vs_beta_alpha_0_01}, increasing the detection weight $\beta$ or
decreasing the deployment cost $\alpha$ drives the expected number of active
Wardens $\mathbb{E}[W]$ upward, often saturating near the maximum network size for
sufficiently small $\alpha$.

However, this escalation does not translate into a meaningful improvement in the covertness vs. $1-\mathbb{P}_{\mathrm{out}}$ frontier, which remains nearly unchanged across all
values of $W$ (Fig.~\ref{fig:soft_fixed_W_tradeoff}). FC thus incurs additional deployment cost $\alpha \mathbb{E}[W]$ to obtain at most small and strongly diminishing reductions in detection error. This behavior reflects a \emph{weak Red Queen effect}: FC expends resources to adjust its operating point in response to the utility function, yet these adjustments fail to overcome the structural limitations of the soft-fusion sensing architecture.

\medskip
\noindent\textbf{Key Insight:}
Under large detection weights, FC accepts higher sensing costs because marginal
reductions in detection error become utility-relevant at equilibrium; however, these
reductions scale sub-linearly while the deployment cost grows linearly, resulting in
a fundamentally inefficient escalation in Warden deployment under soft fusion.

\subsection{Impact of Cost and Detection Pressure}
The operational behavior of FC, illustrated in
Figs.~\ref{fig:EW_vs_alpha} and~\ref{fig:EW_vs_beta_alpha_0_01}, is governed by the interplay between sensing cost $\alpha$ and detection pressure $\beta$. As $\beta$ increases, the equilibrium drives the expected number of active Wardens
$\mathbb{E}[W]$ upward, reflecting the increasing utility penalty associated with detection errors. However, this escalation does not correspond to a fundamental improvement in sensing performance.

\begin{figure}[t]
    \centering
    \includegraphics[width=7.25cm]{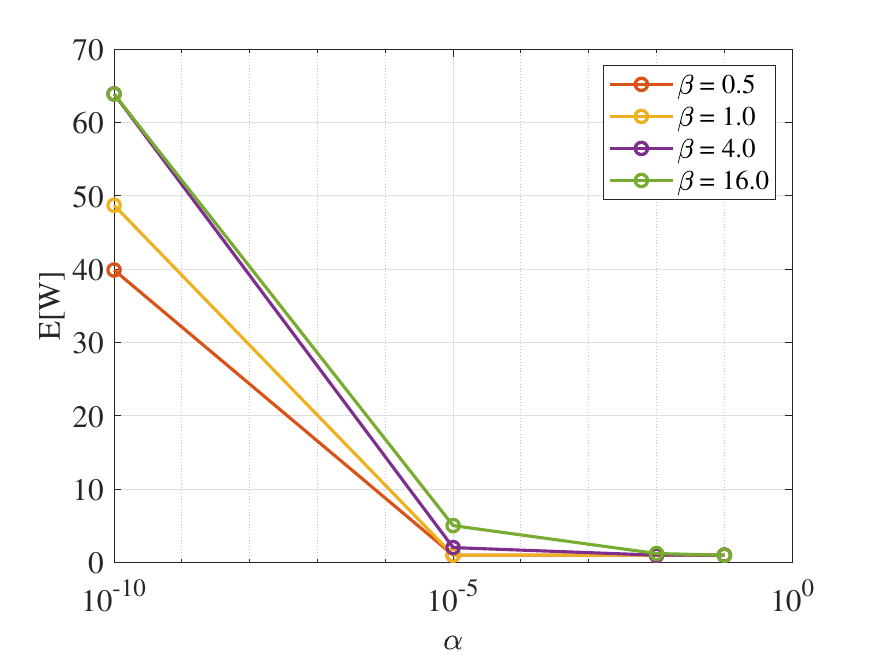}
    \caption{Expected number of active Wardens $\mathbb{E}[W]$ versus the deployment cost parameter $\alpha$. As sensing costs decrease, the equilibrium shifts probability mass toward larger deployments, despite the absence of a structural sensing gain under soft fusion.}
    \label{fig:EW_vs_alpha}   
    \vspace{3mm}
\end{figure}

Fig.~\ref{fig:PMD_PFA_alpha_0_1} shows FC’s operating points in the $(\mathbb{P}_{FA}, \mathbb{P}_{MD})$ plane for $\alpha = 0.1$. As $\beta$ increases from very small values, the operating point undergoes a noticeable transition, with the total detection error $\mathbb{P}_{FA}+\mathbb{P}_{MD}$ increasing sharply. However, for moderate to large values of $\beta$, further increases result in only limited movement of the operating point, with several points corresponding to different $\beta$ values clustering in a narrow region of the error plane. This behavior indicates diminishing returns from increased detection pressure. Notably, since the FC minimizes a \emph{weighted sum} of detection errors $\mathbb{P}_{FA}+\mathbb{P}_{MD}$ rather than each component individually, the equilibrium may tolerate larger $\mathbb{P}_{MD}$ values (even exceeding $0.5$) when doing so reduces the overall weighted error.

\begin{figure}[t]
    \centering
    \includegraphics[width=7.25cm]{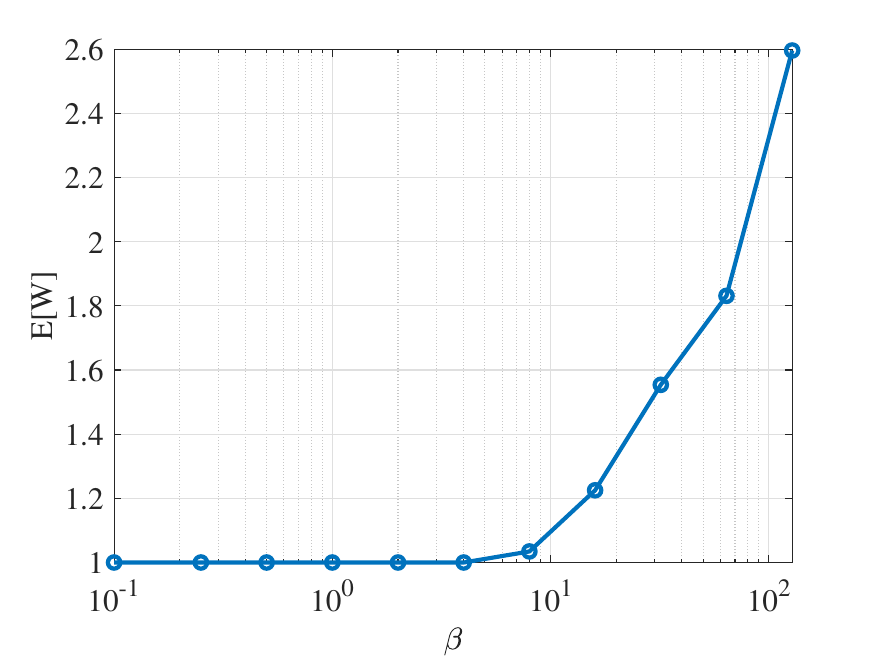}
    \caption{Expected number of active Wardens $\mathbb{E}[W]$ at Nash equilibrium versus the detection weight $\beta$, for fixed deployment cost $\alpha=0.01$ and $W\in\{1,4,16,64\}$. As detection pressure increases, the equilibrium shifts toward larger expected deployments.}
    \label{fig:EW_vs_beta_alpha_0_01}
\end{figure}

Taken together, these results indicate that while the equilibrium response to larger $\beta$ and smaller $\alpha$ favors increased Warden deployment, the resulting performance gains remain negligible. This further confirms that soft fusion is largely insensitive to deployment scaling over the operating range considered here, with increased sensing resources yielding at most limited improvements.

\begin{figure}[t]
    \centering
    \includegraphics[width=7.25cm]{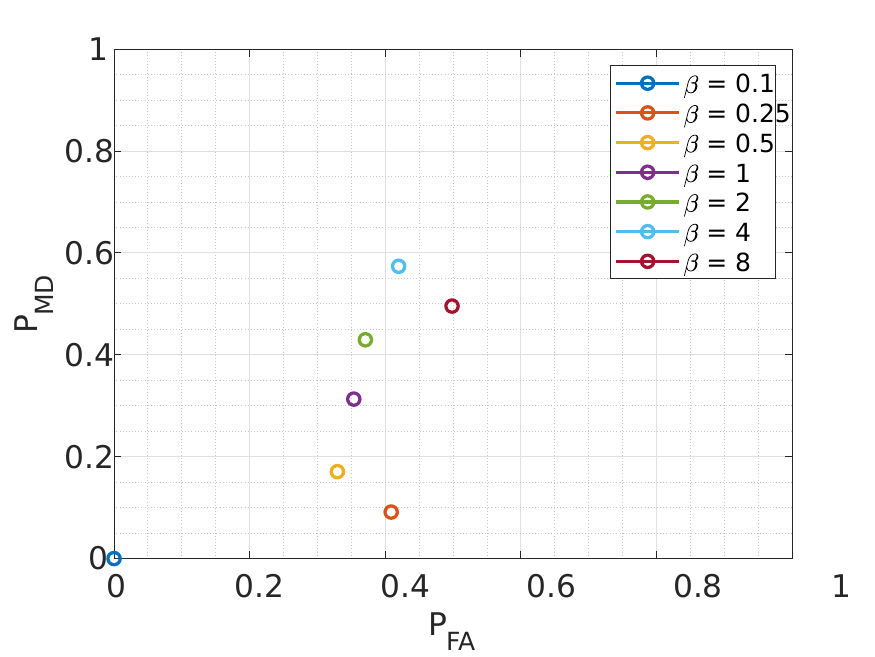}
    \caption{Operating points in the $(\mathbb{P}_{FA},\mathbb{P}_{MD})$ plane for $\alpha = 0.1$ and different $\beta$. Improvements from increasing $\beta$ are modest.}
    \label{fig:PMD_PFA_alpha_0_1}
    \vspace{5mm}
\end{figure}

\subsection{Geometric Warden Deployment}
To benchmark the game-theoretic results and isolate the impact of FC-side randomization over $W$, we consider a semi-strategic geometric deployment baseline in which the number of active Wardens follows the finite-support geometric-shaped distribution introduced in the simulation setup. This model serves as a semi-strategic baseline to assess whether the observed near-invariance is an intrinsic property of the soft-fusion architecture. Although the distribution of $W$ is fixed a priori, the detection thresholds and transmit powers are still determined by the game-theoretic equilibrium corresponding to the geometric sensing configuration. As illustrated in Figs.~\ref{fig:geom_tradeoff} and~\ref{fig:geom_p_curve}, even when Alice and Jammer play their best response against varying average numbers of Wardens (controlled by $p$), there is almost no visible change in either the covertness vs. $1-\mathbb{P}_{\mathrm{out}}$ tradeoff or the total detection error $\mathbb{P}_{FA}+\mathbb{P}_{MD}$.

\begin{figure}[t]
    \centering
    \includegraphics[width=7.25cm]{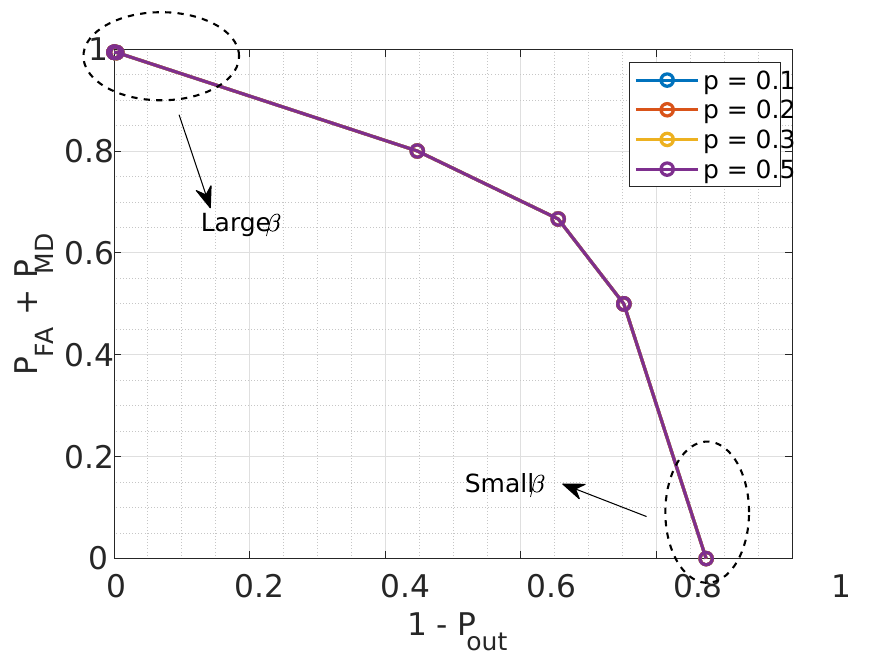}
    \caption{Covertness vs. $1-\mathbb{P}_{\mathrm{out}}$ tradeoff under the finite-support geometric deployment baseline. Curves for different $p$ nearly overlap, indicating minimal impact on detection performance.}
    \label{fig:geom_tradeoff}
    \vspace{5mm}
\end{figure}

This flat response is consistent with the analytical $W$-independence of the optimal soft-fusion threshold and suggests that randomized deployment does not materially change the observed tradeoff over the simulated range.
Moreover, it suggests that even in the absence of strategic optimization, increasing the sensing infrastructure does not mitigate the impact of Alice’s power randomization,
highlighting that the limitations of soft fusion persist beyond the specific game-theoretic setting.
\begin{figure}[t]
    \centering
    \includegraphics[width=7.25cm]{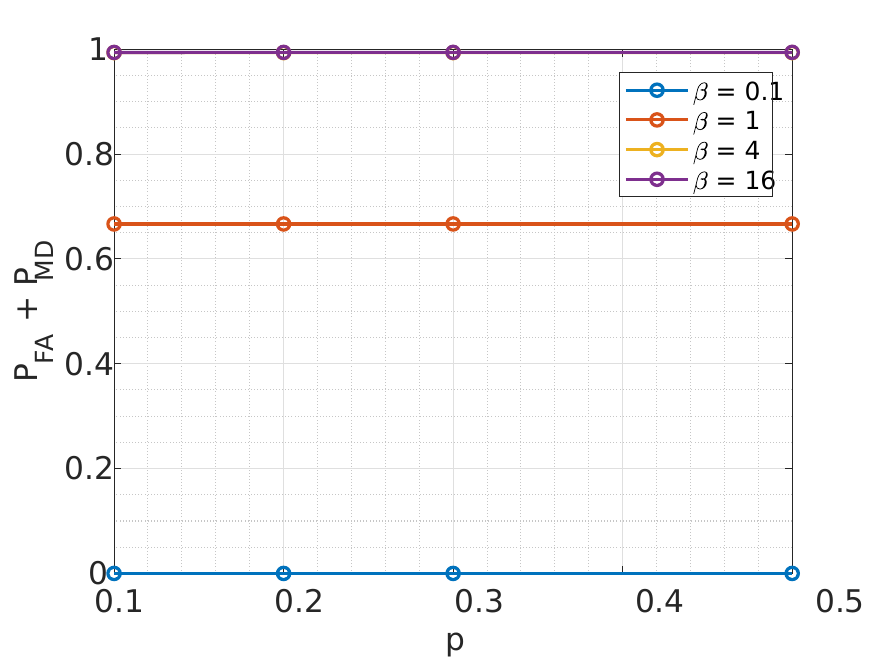}
    \caption{Total detection error $\mathbb{P}_{FA}+\mathbb{P}_{MD}$ as a function of $p$ for several $\beta$ under the finite-support geometric deployment baseline. The flat trend reinforces the insensitivity of soft fusion to $W$.}
    \label{fig:geom_p_curve}
    \vspace{5mm}
\end{figure}
Fig.~\ref{fig:geom_PMD_PFA} shows the operating point in the $(\mathbb{P}_{FA},\mathbb{P}_{MD})$ plane for fixed $p = 0.1$.
\begin{figure}[t]
    \centering
    \includegraphics[width=7.25cm]{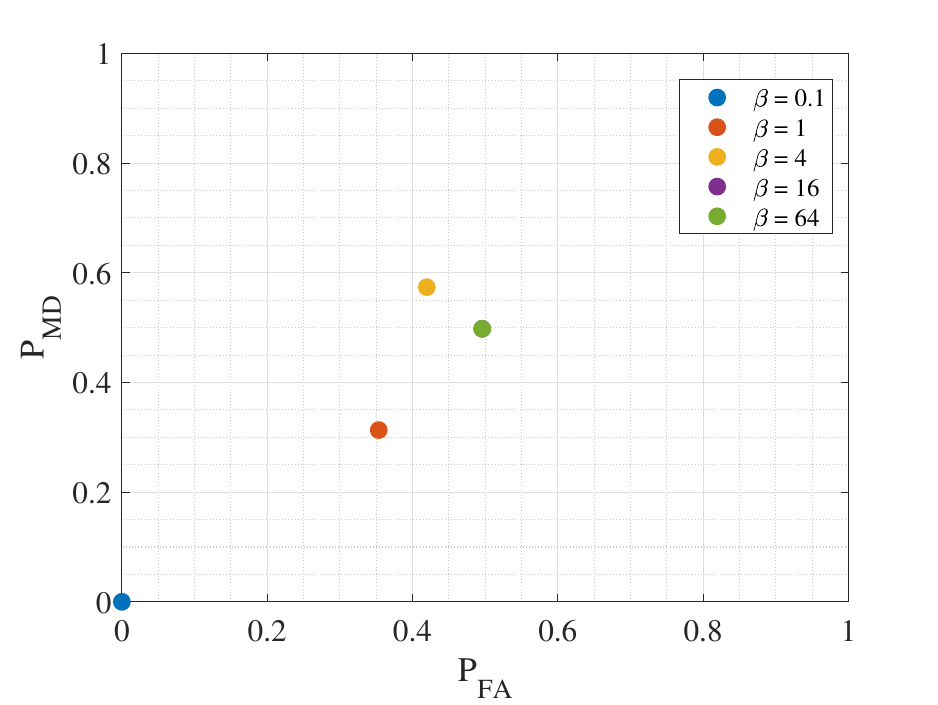}
    \caption{Operating points for $p=0.1$ and various $\beta$. Increasing $\beta$ improves the operating point but with diminishing returns.}
    \label{fig:geom_PMD_PFA}
\end{figure}
Representative values appear in Table~\ref{tab:geom_summary}.

\begin{table}[!b]
\centering
\caption{Representative detection and reliability metrics for $p = 0.1$.}
\begin{tabular}{c c c c c}
\toprule
$\beta$ & $P_{FA}$ & $P_{MD}$ &
$P_{FA}{+}P_{MD}$ & $1 - P_{\mathrm{out}}$ \\
\midrule
0.1  & $\approx 0$ & $\approx 0$ & $\approx 0$ & 0.87 \\
1    & 0.35        & 0.31        & 0.66        & 0.65 \\
4    & 0.42        & 0.57        & 0.99        & $\approx 0$ \\
16   & 0.49        & 0.49        & $\approx 1$ & $\approx 0$ \\
64   & 0.49        & 0.49        & $\approx 1$ & $\approx 0$ \\
\bottomrule
\end{tabular}
\label{tab:geom_summary}
\end{table}

\subsection{Summary of Numerical Findings}

The numerical results consistently validate the structural and strategic limitations
of soft fusion established in Section~IV. First, the near-overlap observed in Fig.~\ref{fig:soft_fixed_W_tradeoff} is consistent with the $W$-independence of the optimal soft-fusion threshold established in Corollary~\ref{cor:tstar_independent_W}, together with the broader robustness properties established in Section~IV, rather than being a parameter-specific coincidence.

Second, the results expose a fundamental \emph{architectural inefficiency}. Under increasing detection pressure or decreasing deployment cost, the zero-sum game drives FC toward a costly escalation in sensing resources, as reflected by the growth of $\mathbb{E}[W]$. However, this escalation fails to produce a meaningful improvement in detection performance, indicating that against adversaries capable of optimal power randomization, soft-fusion architectures operate under an intrinsic performance limitation that neither strategic joint optimization over $(W,t)$ nor constrained randomized deployments can overcome.

\section{Conclusion}\label{sec:conclusion}

This paper analyzed covert wireless communication in the presence of a FC that aggregates energy measurements from multiple Wardens via a soft-fusion architecture. Unlike prior models in which FC randomizes only the detection threshold, we considered a stronger adversarial model where FC jointly randomizes over both the detection threshold and the number
of active Wardens, thereby introducing strategic uncertainty in the sensing infrastructure faced by Alice and Jammer. Our results reveal a structural limitation of soft fusion: FC’s optimal energy-detection threshold is independent of the number of Wardens, while enlarging or randomizing the sensing infrastructure provides at most negligible improvement in the covertness--reliability performance over the operating range considered. Importantly, our findings show that even when FC is equipped with additional structural degrees of freedom beyond those considered in earlier work, soft-fusion architectures remain fundamentally limited despite such randomization.

We established a robustness result showing that, for any finite FC mixed strategy over $(W,t)$, Alice and Jammer can jointly randomize their transmit and jamming powers to achieve outage-feasible communication while attaining the desired level of covertness at FC with probability arbitrarily close to one, provided that sufficiently large transmit and jamming power budgets are available. While the effectiveness of threshold-only randomization was already characterized in our prior work, the present analysis demonstrates that even when FC additionally randomizes the number of active Wardens, the same fundamental limitation persists. In particular, joint randomization over $(W,t)$ does not enable FC to overcome power-randomization countermeasures under soft fusion, nor does it significantly alter the observed detection frontier.

From a game-theoretic perspective, the equilibrium exhibits a Red Queen effect: as detection pressure increases, FC activates more Wardens and incurs linearly growing deployment costs, yet this enlarged strategy space does not translate into a new structural sensing advantage under soft fusion, since the optimal threshold remains $W$-independent and the fundamental covertness--reliability limitation persists.

Overall, our findings highlight an intrinsic weakness of soft fusion in covert detection problems. They suggest that adversaries seeking to meaningfully constrain covert communication must rely on fundamentally different
architectures, such as hard fusion, non-energy-based statistics, or adaptive per-Warden processing. Extensions to fading Wardens, multi-antenna systems, and joint communication--sensing designs constitute promising directions for future research.

\appendices
\section{Derivation of the Optimal Soft-Fusion Threshold}
\label{app:optimal_treshold}

In this appendix, we derive the optimal energy-detection threshold in Theorem~\ref{th:optimal_threshold}. As described earlier, each Warden forwards its local energy measurement to FC, which forms a global soft statistic. Under soft fusion, this statistic follows a Gamma distribution under both hypotheses, with parameters depending only on $(P^{(A)},P^{(J)},\sigma_w^2)$.

Given fixed transmit and jamming powers $(P^{(A)},P^{(J)})$, we minimize FC’s total detection error $\mathbb{P}_{FA}(W, t) + \mathbb{P}_{MD}(W, t)$ for fixed $W$. Since the soft-fusion statistic is normalized by $1/(WN)$, it remains Gamma-distributed with shape parameter $WN$ under both hypotheses, and the dependence on $W$ disappears after solving the first-order condition. Because the total detection error is a smooth scalar function of $t$, the optimal threshold follows from standard first- and second-order conditions.

\subsection*{First-Order Optimality}
Differentiating the FC’s total detection error probability with respect to the threshold $t$
yields:
\begin{equation}
\label{eq:appendix_first_derivative}
\begin{multlined}
\frac{\partial}{\partial t}
\left(\mathbb{P}_{FA}(t)+\mathbb{P}_{MD}(t)\right)
=\\
\frac{1}{\Gamma(WN)}
\Bigg[
\frac{
    \frac{WN}{\sigma_w^2 + P^{(J)} + P^{(A)}}
    \left( \frac{WN\, t}{\sigma_w^2 + P^{(J)} + P^{(A)}} \right)^{WN-1}
}{
    \exp\!\left(\frac{WN\,t}{\sigma_w^2 + P^{(J)} + P^{(A)}}\right)
}
\\
\qquad -
\frac{
    \frac{WN}{\sigma_w^2 + P^{(J)}}
    \left( \frac{WN\,t}{\sigma_w^2 + P^{(J)}} \right)^{WN-1}
}{
    \exp\!\left(\frac{WN\,t}{\sigma_w^2 + P^{(J)}}\right)
}
\Bigg].
\end{multlined}
\end{equation}

Setting~\eqref{eq:appendix_first_derivative} to zero yields the unique
solution. After canceling the factor $WN$ and taking logarithms, the dependence on $W$ vanishes from the stationarity condition, giving:

\begin{equation}
\label{eq:tstar_appendix}
t^\star =
\frac{
  (\sigma_w^2 + P^{(J)})
  (\sigma_w^2 + P^{(J)} + P^{(A)})
  \ln\!\left(
     \frac{\sigma_w^2 + P^{(J)} + P^{(A)}}
          {\sigma_w^2 + P^{(J)}}
  \right)
}{
  P^{(A)}
}.
\end{equation}

This expression matches~\eqref{eq:optimal_threshold}.

\subsection*{Uniqueness of the Minimizer}

To show that $t^\star$ is the unique minimizer, we examine the second
derivative. Ignoring the positive constant $1/\Gamma(WN)$, the second
derivative can be written as:
\begin{equation}
\begin{multlined}
\frac{\partial^2}{\partial t^2}
\left(\mathbb{P}_{FA}(t)+\mathbb{P}_{MD}(t)\right)
=\\
-\frac{a^2 (WN-1)(at)^{WN-2}}{e^{at}}
+ \frac{a^2 (at)^{WN-1}}{e^{at}}
\\
\quad
+ \frac{b^2 (WN-1)(bt)^{WN-2}}{e^{bt}}
- \frac{b^2 (bt)^{WN-1}}{e^{bt}},
\end{multlined}
\end{equation}

where
\[
a = \frac{WN}{\sigma_w^2 + P^{(J)}},
\qquad
b = \frac{WN}{\sigma_w^2 + P^{(J)} + P^{(A)}}.
\]

It suffices to show that
\begin{equation}
\label{eq:t_unique_condition}
\begin{multlined}
\frac{a^2 (at^\star)^{WN-1}}{e^{a t^\star}}
- \frac{a^2 (WN-1)(at^\star)^{WN-2}}{e^{a t^\star}}
\\
>
\frac{b^2 (bt^\star)^{WN-1}}{e^{b t^\star}}
- \frac{b^2 (WN-1)(bt^\star)^{WN-2}}{e^{b t^\star}}.
\end{multlined}
\end{equation}

After simplification,~\eqref{eq:t_unique_condition} becomes:
\[
\left( \frac{a}{b} \right)^{WN} (a t^\star - WN + 1)
>
\exp\!\big( (a-b)t^\star \big)
(b t^\star - WN + 1).
\]

Substituting~\eqref{eq:tstar_appendix} into the above yields
\[
a t^\star - WN + 1 > b t^\star - WN + 1,
\]
which holds because $a>b$ and $t^\star>0$. Hence, the second derivative is positive at
$t^\star$, and the minimizer is unique.

\section{Proof of Theorem~\ref{th:soft_fusion_robustness}}
\label{app:soft_fusion_robustness}

We prove the theorem by combining two ingredients:
\begin{enumerate}
    \item a threshold-range characterization for soft fusion under a fixed outage-feasible transmit--jamming power pair, and
    \item a finite-support randomization argument over a collection of pairwise disjoint threshold ranges generated by outage-feasible power pairs.
\end{enumerate}
Throughout the proof, it suffices to consider $0<\varepsilon<1$, since the claim is trivial otherwise.

\medskip
\noindent\textbf{Step 1: A threshold range that avoids large detection error for a fixed power pair.}

Fix any outage-feasible transmit--jamming power pair 
$\bigl(P^{(A)},P^{(J)}\bigr)$ and any number of active Wardens $W$.
Under soft fusion, the FC statistic is
\[
T_{FC}(W)=\frac{1}{WN}\sum_{w=1}^{W}\sum_{n=1}^{N}|y_{w,n}|^2.
\]
Under $H_0$ and $H_1$, the random variables $\{|y_{w,n}|^2\}$ are i.i.d., and the normalized statistic has means
\[
\mu_0 \triangleq \sigma_w^2 + P^{(J)},
\qquad
\mu_1 \triangleq \sigma_w^2 + P^{(J)} + P^{(A)},
\]
respectively. Moreover, since each $|y_{w,n}|^2$ is exponentially distributed under the adopted energy-detection model, we have
\[
\mathrm{Var}\!\left(T_{FC}(W)\mid H_0\right)=\frac{\mu_0^2}{WN},
\quad
\mathrm{Var}\!\left(T_{FC}(W)\mid H_1\right)=\frac{\mu_1^2}{WN}.
\]

Define the event
\[
\mathcal{E}(W,t;P^{(A)},P^{(J)})
\triangleq
\left\{
\mathbb{P}_{FA}(W,t)+\mathbb{P}_{MD}(W,t)\ge 1-\varepsilon
\right\}.
\]
We now characterize a threshold interval outside of which $\mathcal{E}$ necessarily occurs. If
\[
t < \mu_0 - \frac{\mu_0}{\sqrt{WN\varepsilon}},
\]
then by Chebyshev's inequality,
\[
\begin{split}
&\Pr\!\left(T_{FC}(W)\le t \,\middle|\, H_0\right)
\le\\
&\Pr\!\left(\left|T_{FC}(W)-\mu_0\right|\ge \frac{\mu_0}{\sqrt{WN\varepsilon}} \,\middle|\, H_0\right)
\le \varepsilon.
\end{split}
\]
Hence,
\[
\mathbb{P}_{FA}(W,t)
=
\Pr\!\left(T_{FC}(W)>t \,\middle|\, H_0\right)
\ge 1-\varepsilon,
\]
which implies
\[
\mathbb{P}_{FA}(W,t)+\mathbb{P}_{MD}(W,t)\ge 1-\varepsilon.
\]

Similarly, if
\[
t > \mu_1 + \frac{\mu_1}{\sqrt{WN\varepsilon}},
\]
then Chebyshev's inequality gives
\[
\begin{split}
&\Pr\!\left(T_{FC}(W)> t \,\middle|\, H_1\right)
\le\\
&\Pr\!\left(\left|T_{FC}(W)-\mu_1\right|\ge \frac{\mu_1}{\sqrt{WN\varepsilon}} \,\middle|\, H_1\right)
\le \varepsilon,
\end{split}
\]
and therefore
\[
\mathbb{P}_{MD}(W,t)
=
\Pr\!\left(T_{FC}(W)\le t \,\middle|\, H_1\right)
\ge 1-\varepsilon.
\]
Again,
\[
\mathbb{P}_{FA}(W,t)+\mathbb{P}_{MD}(W,t)\ge 1-\varepsilon.
\]

Consequently, for any fixed outage-feasible power pair $\bigl(P^{(A)},P^{(J)}\bigr)$ and fixed $W$, the only thresholds that can possibly avoid the event $\mathcal{E}$ must lie in the interval
\begin{equation}
\label{eq:threshold_avoidance_interval_W}
\mathcal{I}_W\!\bigl(P^{(A)},P^{(J)}\bigr)
\triangleq
\left[
\mu_0-\frac{\mu_0}{\sqrt{WN\varepsilon}},
\;
\mu_1+\frac{\mu_1}{\sqrt{WN\varepsilon}}
\right].
\end{equation}

\medskip
\noindent\textbf{Step 2: A robust threshold range over the finite support of the FC strategy.}

Let the FC mixed strategy $\pi^{FC}$ be supported on the finite set
\[
\mathcal{S}_{FC}=\{(W_k,t_k)\}_{k=1}^{K}.
\]
Define
\[
W_{\min}\triangleq \min_{1\le k\le K} W_k.
\]
For a fixed outage-feasible power pair $\bigl(P^{(A)},P^{(J)}\bigr)$, define the \emph{robust threshold interval}
\begin{equation}
\label{eq:robust_interval}
\mathcal{I}\!\bigl(P^{(A)},P^{(J)}\bigr)
\triangleq
\left[
\mu_0-\frac{\mu_0}{\sqrt{W_{\min}N\varepsilon}},
\;
\mu_1+\frac{\mu_1}{\sqrt{W_{\min}N\varepsilon}}
\right].
\end{equation}
Since $W_k\ge W_{\min}$ for every $k$, we have
\[
\frac{1}{\sqrt{W_kN\varepsilon}}
\le
\frac{1}{\sqrt{W_{\min}N\varepsilon}},
\]
and therefore
\[
\mathcal{I}_{W_k}\!\bigl(P^{(A)},P^{(J)}\bigr)
\subseteq
\mathcal{I}\!\bigl(P^{(A)},P^{(J)}\bigr),
\qquad
k=1,\dots,K.
\]
It follows from Step 1 that if $t_k \notin \mathcal{I}\!\bigl(P^{(A)},P^{(J)}\bigr)$, then necessarily
\[
\mathbb{P}_{FA}(W_k,t_k)+\mathbb{P}_{MD}(W_k,t_k)\ge 1-\varepsilon.
\]
Thus, for a given power pair, the interval in \eqref{eq:robust_interval} is a single threshold range that is simultaneously valid for every $W_k$ in the finite support of the FC strategy.

\medskip
\medskip
\noindent\textbf{Step 3: Constructive generation of pairwise disjoint robust threshold intervals.}

We now show constructively that Alice and Jammer can choose arbitrarily many outage-feasible power pairs whose associated robust threshold intervals are pairwise disjoint.

Fix any outage level $1-\mathbb{P}_{\mathrm{out}}$ at Bob. Let $\mathcal{F}_{\mathrm{out}}$ denote the set of outage-feasible transmit--jamming power pairs satisfying this requirement. By assumption, the admissible transmit and jamming power ranges can be made sufficiently large while preserving outage feasibility. Therefore, as in the threshold-range construction underlying Theorem~2 and Corollary~2.1 in~\cite{arghavani2023covert}, one may choose an initial outage-feasible pair
\[
\bigl(P^{(A)}_1,P^{(J)}_1\bigr)\in \mathcal{F}_{\mathrm{out}},
\]
and then recursively construct a sequence of outage-feasible pairs
\[
\bigl(P^{(A)}_1,P^{(J)}_1\bigr),\,
\bigl(P^{(A)}_2,P^{(J)}_2\bigr),\,
\dots,\,
\bigl(P^{(A)}_M,P^{(J)}_M\bigr)
\in \mathcal{F}_{\mathrm{out}}
\]
such that, for each $i=1,\dots,M-1$,
\begin{equation}
\label{eq:recursive_separation_condition}
\mu_{1,i}
+\frac{\mu_{1,i}}{\sqrt{W_{\min}N\varepsilon}}
<
\mu_{0,i+1}
-\frac{\mu_{0,i+1}}{\sqrt{W_{\min}N\varepsilon}},
\end{equation}
where
\[
\mu_{0,i}=\sigma_w^2+P^{(J)}_i,
\qquad
\mu_{1,i}=\sigma_w^2+P^{(J)}_i+P^{(A)}_i.
\]

Condition~\eqref{eq:recursive_separation_condition} guarantees that the robust interval
\[
\mathcal{I}_i
=
\left[
\mu_{0,i}-\frac{\mu_{0,i}}{\sqrt{W_{\min}N\varepsilon}},
\;
\mu_{1,i}+\frac{\mu_{1,i}}{\sqrt{W_{\min}N\varepsilon}}
\right]
\]
lies strictly to the left of
\[
\mathcal{I}_{i+1}
=
\left[
\mu_{0,i+1}-\frac{\mu_{0,i+1}}{\sqrt{W_{\min}N\varepsilon}},
\;
\mu_{1,i+1}+\frac{\mu_{1,i+1}}{\sqrt{W_{\min}N\varepsilon}}
\right].
\]
Hence, $\mathcal{I}_i\cap \mathcal{I}_{i+1}=\varnothing$. By induction, the whole collection $\{\mathcal{I}_1,\dots,\mathcal{I}_M\}$ is pairwise disjoint.

The existence of such a recursive construction follows from the same geometric principle used in Appendix~C of~\cite{arghavani2023covert}: once a feasible pair is fixed, Alice and Jammer can move to a new outage-feasible pair with sufficiently larger induced means so that the next threshold interval starts strictly to the right of the previous one. Since the admissible maximum transmit and jamming powers can be taken arbitrarily large, this recursive procedure can be continued for any prescribed finite integer $M$. Therefore, Alice and Jammer can generate arbitrarily many outage-feasible power pairs with pairwise disjoint robust threshold intervals.

\medskip
\noindent\textbf{Step 4: For any FC action, at most one power pair can avoid large detection error.}

Fix an arbitrary FC action $(W_k,t_k)\in\mathcal{S}_{FC}$. Since the intervals $\{\mathcal{I}_i\}_{i=1}^{M}$ are pairwise disjoint, the threshold $t_k$ can belong to at most one of them. Therefore, among the $M$ outage-feasible power pairs, there is at most one index $i$ for which $t_k \in \mathcal{I}_i$. For every other index $j\neq i$, we have $t_k \notin \mathcal{I}_j$, and therefore, by Step 2,
\[
P_{FA}\!\bigl(W_k,t_k;P^{(A)}_j,P^{(J)}_j\bigr)
+
P_{MD}\!\bigl(W_k,t_k;P^{(A)}_j,P^{(J)}_j\bigr)
\ge 1-\varepsilon.
\]
Uniform randomization is adopted because, once the $M$ outage-feasible power pairs are constructed so that their associated robust threshold intervals are pairwise disjoint, no FC action $(W_k,t_k)$ can be simultaneously aligned with more than one such interval. Hence, the uniform distribution equalizes and bounds the maximum success probability of any fixed FC action over this finite set.
Since Alice and Jammer randomize uniformly over the $M$ power pairs, it follows that for this fixed FC action,
\[
\begin{split}
\Pr\!\left(
\mathbb{P}_{FA}(W_k,t_k)+\mathbb{P}_{MD}(W_k,t_k)\ge 1-\varepsilon
\right.\\
\left.\;\middle|\;(W,t)=(W_k,t_k)
\right)
\ge 1-\frac{1}{M}.
\end{split}
\]

\medskip
\noindent\textbf{Step 5: Averaging over the FC mixed strategy.}

Finally, average the above bound over the FC mixed strategy $\pi^{FC}$. Since the lower bound $1-\frac{1}{M}$ is uniform over all $(W_k,t_k)\in\mathcal{S}_{FC}$, we obtain
\[
\begin{split}
\Pr\!\left(
\mathbb{P}_{FA}(W,t)+\mathbb{P}_{MD}(W,t)\ge 1-\varepsilon
\right)
\ge\\
\sum_{k=1}^{K}\pi^{FC}(W_k,t_k)\left(1-\frac{1}{M}\right)
=
1-\frac{1}{M}.
\end{split}
\]
Because $M$ can be chosen arbitrarily large by taking the admissible transmit and jamming power ranges sufficiently large, the probability above can be made arbitrarily close to one. All selected power pairs are outage-feasible by construction, so Bob's outage requirement is preserved throughout.

Therefore, Alice and Jammer can construct a mixed strategy over outage-feasible transmit--jamming power pairs such that
\[
\mathbb{P}_{FA}(W,t)+\mathbb{P}_{MD}(W,t)\ge 1-\varepsilon
\]
with probability arbitrarily close to one under the joint randomness of the FC action $(W,t)\sim\pi^{FC}$ and the Alice--Jammer power randomization. This proves Theorem~\ref{th:soft_fusion_robustness}.

\hfill $\blacksquare$

\bibliographystyle{IEEEtran}
\bibliography{ICC}
\end{document}